\documentclass[pdflatex,iicol,sn-basic]{sn-jnl} 
\pdfoutput=1 

\usepackage{lipsum}
\usepackage{amsmath}
\usepackage{cool}
\usepackage{mathtools}
\usepackage{dirtytalk}
\usepackage{cuted}
\usepackage{relsize}
\usepackage[font=small]{caption}
\usepackage{subcaption}
\usepackage{tabularray}
\usepackage{xurl}

\setlength\stripsep{3pt plus 1pt minus 1pt}

\algrenewcommand\algorithmicrequire{\textbf{Initialisation:}}

\usepackage{stfloats}

\usepackage{xcolor}
\usepackage[frozencache,cachedir=.]{minted} 
\usepackage{textgreek}
\DeclareUnicodeCharacter{2218}{$\scriptstyle\circ$}
\DeclareUnicodeCharacter{2248}{$\scriptstyle\approx$}
\DeclareUnicodeCharacter{2299}{$\scriptstyle\odot$}
\DeclareUnicodeCharacter{222B}{$\scriptstyle\int$}
\DeclareUnicodeCharacter{22C5}{$\scriptstyle\cdot$}
\DeclareUnicodeCharacter{2207}{$\scriptstyle\nabla$}
\DeclareUnicodeCharacter{1D39}{$^\texttt{M}$}
\DeclareUnicodeCharacter{1D34}{$^\texttt{H}$}

\newcommand{\jl}[1]{\small\Verb{#1}}

\definecolor{shadecolor}{gray}{.92}
\definecolor{incolor}{rgb}{0,0,.7}
\definecolor{outcolor}{rgb}{.65,0,0}
\definecolor{syntaxcolor}{rgb}{.65,0,0}
\definecolor{bg}{rgb}{0.93,0.93,0.93}
\definecolor{myblue}{RGB}{93,188,210}
\definecolor{mygreen}{RGB}{189,210,93}
\definecolor{myorange}{RGB}{210,173,93}
\definecolor{myred}{RGB}{210,93,130}
\definecolor{mydarkblue}{RGB}{93,130,210}
\definecolor{mydarkgreen}{RGB}{93,210,173}

\newminted{jl}{fontsize=\footnotesize,breaklines}


\jyear{2024}

\theoremstyle{thmstyleone}%
\newtheorem{theorem}{Theorem}
\newtheorem{proposition}[theorem]{Proposition}%

\theoremstyle{thmstyletwo}%

\theoremstyle{thmstylethree}%
\newtheorem{lemma}{Lemma}

\raggedbottom

\usepackage{ifthen}
\usepackage[normalem]{ulem}

\newboolean{AnnotateChanges}
\setboolean{AnnotateChanges}{false} 

\ifthenelse{\boolean{AnnotateChanges}}{%
\definecolor{annotateColor}{RGB}{0,0,255}
}{%
\definecolor{annotateColor}{RGB}{0,0,0} 
}

\newcommand{\added}[2][]{%
    \ifthenelse{\boolean{AnnotateChanges}}{%
        \textcolor{annotateColor}{#2}\textbf{\textcolor{red}{#1}}}{#2}}
        
\newcommand{\deleted}[2][]{%
    \ifthenelse{\boolean{AnnotateChanges}}{%
        \textcolor{annotateColor}{\ifmmode\text{\sout{\ensuremath{#2}}}\else\sout{#2}\fi}\textbf{\textcolor{red}{#1}}}{}}

\newcommand{\replaced}[3][]{%
    \ifthenelse{\boolean{AnnotateChanges}}{%
        \textcolor{annotateColor}{#2\sout{#3}}\textbf{\textcolor{red}{#1}}}{#2}}

\begin{document}

\title[\textcolor{white}{.}]{GridapTopOpt.jl: A scalable Julia toolbox for level set-based topology optimisation}

\author*[1]{\fnm{Zachary J.} \sur{Wegert}}\email{zach.wegert@hdr.qut.edu.au}
\author[2]{\fnm{Jordi} \sur{Manyer}}\email{jordi.manyer@monash.edu}
\author[3]{\fnm{Connor} \sur{Mallon}}\email{connor.mallon@monash.edu}
\author[2,4]{\fnm{Santiago} \sur{Badia}}\email{santiago.badia@monash.edu}
\author*[1]{\fnm{Vivien J.} \sur{Challis}}\email{vivien.challis@qut.edu.au}

\affil[1]{\orgdiv{School of Mathematical Sciences}, \orgname{Queensland University of Technology}, \orgaddress{\street{2 George St}, \city{Brisbane}, \postcode{4000}, \state{QLD}, \country{Australia}}}
\affil[2]{\orgdiv{School of Mathematics}, \orgname{Monash University}, \orgaddress{\street{Wellington Rd}, \city{Clayton}, \postcode{3800}, \state{Victoria}, \country{Australia}}}
\affil[3]{\orgdiv{Department of Chemical and Biological Engineering}, \orgname{Monash University}, \orgaddress{\street{Wellington Rd}, \city{Clayton}, \postcode{3800}, \state{Victoria}, \country{Australia}}}
\affil[4]{\orgdiv{Centre Internacional de M\'etodes Num\'erics a l’Enginyeria}, \orgname{Campus Nord}, \orgaddress{UPC, \postcode{08034}, \city{Barcelona}, \country{Spain}}}

\abstract{In this paper we present GridapTopOpt, an extendable framework for level set-based topology optimisation that can be readily distributed across a personal computer or high-performance computing cluster. The package is written in Julia and uses the Gridap package ecosystem for parallel finite element assembly from arbitrary weak formulations of partial differential equation (PDEs) along with the scalable solvers from the Portable and Extendable Toolkit for Scientific Computing (PETSc). The resulting user interface is intuitive and easy-to-use, allowing for the implementation of a wide range of topology optimisation problems with a syntax that is near one-to-one with the mathematical notation. Furthermore, we implement automatic differentiation to help mitigate the bottleneck associated with the analytic derivation of sensitivities for complex problems. GridapTopOpt is capable of solving a range of benchmark and research topology optimisation problems with large numbers of degrees of freedom. This educational article demonstrates the usability and versatility of the package by describing the formulation and step-by-step implementation of several distinct topology optimisation problems. The driver scripts for these problems are provided and the package source code is available at \url{https://github.com/zjwegert/GridapTopOpt.jl}.}

\keywords{Topology optimisation, Level set method, Distributed computing, Automatic differentiation, Julia}

\maketitle
\textcolor{white}{a}
\newpage
\section{Introduction}\label{sec: intro}
Owing to a rich theoretical foundation and wide array of ever-evolving industrial applications, topology optimisation has become an increasingly popular computational approach in both industry and academia \citep{TopOptMonograph, DeatonGrandhi2013, TopOptReviewSigmund}. Introduced in the seminal paper by \cite{SeminalTop88}, the field now boasts several techniques for solving real-world problems. These include density-based methods \citep{Bendsoe89, Rozvanyetal1992} in which the design variables are material densities of elements or nodes in a mesh, and level set-based methods \citep{10.1016/S0045-7825(02)00559-5_2003,10.1016/j.jcp.2003.09.032_2004} in which the boundary of the domain is implicitly tracked via a level set function and updated according to an evolution equation. In conventional level set methods \cite[e.g.,][]{10.1016/S0045-7825(02)00559-5_2003,10.1016/j.jcp.2003.09.032_2004}, the Hamilton-Jacobi evolution equation is used to evolve the design interface under a normal velocity field that is inferred from shape derivatives. 

In topology optimisation, educational articles are an important way of disseminating techniques and code to students and newcomers to the field. Beginning with the \say{99-line} Matlab code by \cite{10.1007/s001580050176_2001} for density-based topology optimisation, there are now several educational contributions for density methods \citep[e.g.,][]{10.1007/s00158-010-0594-7_2011,10.1007/s00158-014-1157-0_2015,10.1007/s00158-021-03050-7_2021,10.1007/s00158-021-02917-z_2021,10.1007/s00158-022-03339-1_2022,10.1007/s00158-022-03420-9_2023}, level set methods \citep[e.g.,][]{10.1007/s00158-009-0430-0_2010,10.1007/s00158-014-1190-z_2015,10.1007/s00158-018-1950-2_2018,10.1007/s00158-018-1904-8_2018}, and other methods \citep[e.g.,][]{10.1007/s00158-020-02722-0_2021,10.1007/s00158-020-02816-9_2021,10.1007/s00158-023-03529-5_2023}. A detailed review of the educational literature is given by \cite{10.1007/s00158-021-03050-7_2021}, which makes clear that the vast majority of educational articles are implemented in Matlab or the open-source equivalent GNU Octave. In contrast, we utilise the open-source programming language Julia \citep{doi:10.1137/141000671} that provides several benefits in the context of topology optimisation. Firstly, Julia is implemented using a just-in-time (JIT) compiler that is able to convert generic high-level code to machine code that can be executed efficiently. As a result, users can implement type-generic algorithms and achieve similar performance to fully-compiled languages like C or Fortran while maintaining the flexibility of interpreted languages like MATLAB or Python. In addition, multiple dispatch allows the compiler to efficiently choose a function implementation based on argument type. This is particularly useful for extending a library without needing to resort to modifying the core implementation of the library. As we will later see, multiple dispatch also helps reduce code duplication particularly when writing code to be executed in parallel.

To make large-scale or multi-physics three-dimensional topology optimisation problems viable, it is becoming common practice to utilise parallel processing via distributed CPU (central processing unit) computing \citep[e.g.,][]{10.1007/s00158-014-1157-0_2015} or GPU (graphics processing unit) computing \cite[e.g.,][]{10.1137/070699822_2009,10.1007/s00158-013-0980-z_2014}. A common method for achieving a distributed CPU implementation is via a Message Passing Interface (MPI) \citep{mpi41}, whereby instances of a program are run across several CPUs and data is communicated between CPUs via MPI \citep{10.1007/s00158-014-1157-0_2015}. In this type of implementation, the mesh is typically partitioned into local parts and computation is conducted on each partition synchronously until a point where communication is required to pass data between partitions. For density-based methods there are a few educational articles that provide code and describe this computational framework \citep{10.1007/s00158-014-1157-0_2015,10.1007/s00158-021-02917-z_2021}. However, for level set-based methods there is a lack of educational articles pertaining to parallel computing methods. Note that there are non-educational articles that focus on this topic \cite[e.g.,][]{10.1016/j.cma.2018.08.028_2019,10.1016/j.compstruc.2019.05.010_2019,10.1016/j.jcp.2020.109574_2020,10.1016/j.finel.2021.103561_2021,10.1002/nme.6923_2022,10.1016/j.cma.2022.115112_2022,10.1007/s00158-021-03086-9_2022,Wu_Zhu_Gao_Gao_Liu_2024}, however none of these provide portable or easy-to-use code. In addition, distributed computing methods implemented in C\texttt{++} and the Portable, Extensible Toolkit for Scientific Computation (PETSc) are generally harder to extend owing to a low level application programming interface (API).

Motivated by the discussion above, the goal of this educational article and the Julia package \textit{GridapTopOpt} is to provide an extendable framework for level set-based topology optimisation with a high-level API that can be run in serial (single CPU) or distributed in parallel (multiple CPUs). To achieve this we utilise the Julia package Gridap \citep{Badia2020,Verdugo2022} and related satellite packages for the finite element (FE) computations. Gridap allows for the implementations of a weak formulation for a partial differential equation (PDE) problem with a syntax that is near one-to-one with the mathematical notation. Moreover, for the parallel implementation we leverage the distributed linear algebra package PartitionedArrays \citep{PartitionedArrays_Verdugo2021} along with the distributed finite element package GridapDistributed \citep{Badia2022}. Using these tools we implement parallel finite elements and parallel finite difference stencils for solving the Hamilton-Jacobi evolution equation while entirely avoiding the need for explicit MPI operations. In addition, we use the adjoint method and the automatic differentiation implemented in Gridap to provide automatic differentiation of PDE-constrained functionals. As a result of the aformentioned design choices, GridapTopOpt is able to handle a wide range of topology optimisation problems (e.g., two/three dimensional, linear/nonlinear, etc.), can be run in either serial or parallel with minor code modifications, and possesses a high level API that facilitates fast user implementation of new problems with automatic differentiation. We expect GridapTopOpt to be a powerful package for solving new topology optimisation problems.

The remainder of the paper is as follows. In Section \ref{sec: mathback} we discuss the mathematical background for level set-based topology optimisation and automatic differentiation. In Section \ref{sec: impl} we discuss the implementation of GridapTopOpt. In Section \ref{sec: examples} we provide useful educational examples and demonstrate the versatility of our implementation by formulating and solving several topology optimisation problems using the framework. In Section \ref{sec: benchmarks} we provide strong and weak scaling benchmarks of the parallel implementation. Finally, in Section \ref{sec: concl} we present our concluding remarks.

\section{Mathematical background}\label{sec: mathback}
In this section we present a general description of a topology optimisation problem and subsequently outline the mathematical concepts used in the implementation of GridapTopOpt.

\subsection{General problem description}\label{sec: gen}
Suppose we consider a bounding domain $D\subset\mathbb{R}^d$ with $d=2$ or $d=3$. We consider finding a subdomain $\Omega\subset D$ that solves the PDE-constrained optimisation problem:
\begin{equation}\label{eqn: generic opt prob}
\begin{aligned}
    \underset{\Omega\subset D}{\min}~
      &J(\Omega,u(\Omega))\\
    \text{s.t.}~~ & C_i(\Omega,u(\Omega))=0,~i=1,\dots,N,\\
    &a({u},{v})=l({v}),~\forall {v}\in V,
\end{aligned}
\end{equation}
where the objective $J$ and constraints $C_i$ depend on the solution $u(\Omega)$ of the PDE state equations represented in weak form in the final line. In the following we omit $u(\Omega)$ from $J$ and $C_i$ for notational brevity. Note that in this formulation $u$ can be scalar-valued, vector-valued, or a collection of several states in the case of multi-physics problems. In specific concrete problem descriptions we will denote vector-valued states by a bold symbol.

\subsection{Shape differentiation}\label{sec: shape diff}
The first-order change of a functional $J(\Omega)$ under a change in $\Omega$ can be understood by considering smooth variations of the domain $\Omega$ of the form $\Omega_{\boldsymbol{\theta}} =(\boldsymbol{I}+\boldsymbol{\theta})(\Omega)$, where $\boldsymbol{\theta} \in W^{1,\infty}(\mathbb{R}^d,\mathbb{R}^d)$ is a vector field. The shape derivative of $J(\Omega)$ at $\Omega$ is defined as the Fr\'echet derivative in $W^{1, \infty}(\mathbb{R}^d, \mathbb{R}^d)$ at $\boldsymbol{\theta}$ of the application $\boldsymbol{\theta} \rightarrow J(\Omega_{\boldsymbol{\theta}})$, i.e.,
\begin{equation}\label{eqn: shape deriv defin}
J(\Omega_{\boldsymbol{\theta}})=J(\Omega)+J^{\prime}(\Omega)(\boldsymbol{\theta})+\mathrm{o}(\boldsymbol{\theta})  
\end{equation}
with $\lim _{\boldsymbol{\theta} \rightarrow 0} \frac{\lvert\mathrm{o}(\boldsymbol{\theta})\rvert}{\|\boldsymbol{\theta}\|}=0,$ where the shape derivative $J^{\prime}(\Omega)$ is a continuous linear form on $W^{1, \infty}(\mathbb{R}^d, \mathbb{R}^d)$ \citep{10.1016/j.jcp.2003.09.032_2004,10.1016/bs.hna.2020.10.004_978-0-444-64305-6_2021}.

The shape derivatives of volume and surface functionals whose integrands do not depend on $\Omega$ are readily available in the literature \citep[e.g.,][]{ChallisThesis,10.1016/bs.hna.2020.10.004_978-0-444-64305-6_2021}. When the functional of interest $J$ depends on the solution of a PDE at $\Omega$, C\'ea's formal method \citep{10.1051/m2an/1986200303711_1986} can be applied to find the shape derivative of $J$. A detailed discussion of this method is given by \cite{10.1016/bs.hna.2020.10.004_978-0-444-64305-6_2021}.

\subsection{Descent direction}\label{sec: descent dir}
As discussed by \cite{10.1016/bs.hna.2020.10.004_978-0-444-64305-6_2021}, it is common that the shape derivative of $J$ takes the form
\begin{equation}\label{eqn: speical shape der}
    J^\prime(\Omega)(\boldsymbol{\theta})=\int_{\partial\Omega}q~\boldsymbol{\theta}\cdot\boldsymbol{n}~\mathrm{ds}
\end{equation}
where $q:\partial\Omega\rightarrow\mathbb{R}$ is sometimes called the \textit{shape sensitivity}. A possible descent direction that reduces $J$ in Eq. \ref{eqn: shape deriv defin} is then $\boldsymbol{\theta}=-\tau q\boldsymbol{n}$ for sufficiently small $\tau>0$. 

It is common to utilise the Hilbertian extension-regularisation framework (see \cite{10.1016/bs.hna.2020.10.004_978-0-444-64305-6_2021} and references therein) to solve an identification problem over a Hilbert space $H$ on $D$ with inner product $\langle\cdot,\cdot\rangle_H$: 
\begin{equation}\label{eqn: hilb extension wf}%
   \begin{aligned}
        &\textit{Find }g_\Omega\in H\textit{ such that}\\
        &\langle g_\Omega,w\rangle_H=J^{\prime}(\Omega)(-w\boldsymbol{n})~\forall w\in H.
    \end{aligned}
\end{equation}%
The solution $g_\Omega$ to this problem naturally extends the shape sensitivity from $\partial\Omega$ onto the bounding domain $D$, and ensures a descent direction for $J(\Omega)$ with $H$-regularity. Importantly, using the definition of a shape derivative, it can be shown that
\begin{align*}
    J(\Omega_{\boldsymbol{\theta}})<J(\Omega)
\end{align*}
where $\tau>0$ is sufficiently small.

In this work we choose $H=H^1(D)$ with the inner product 
\begin{equation}\label{eqn: H1 inner product}
    \langle u,v\rangle_{H}=\int_D\left(\alpha^2\boldsymbol{\nabla} u\cdot\boldsymbol{\nabla} v+uv\right)\mathrm{d}\boldsymbol{x},
\end{equation}
where $\alpha$ is the so-called regularisation length scale \cite[e.g.,][]{10.1007/s00158-016-1453-y_2016,10.1007/s40324-018-00185-4_2019,10.1007/s10957-021-01928-6_2021}.

\subsection{Constrained optimisation}
In this paper we implement the classical augmented Lagrangian method to solve the constrained optimisation problem posed in Eq. \ref{eqn: generic opt prob}. As per the discussion by \cite{10.1016/bs.hna.2020.10.004_978-0-444-64305-6_2021} and more generally by \cite{978-0-387-30303-1_2006}, we consider the unconstrained optimisation problem
\begin{equation}\label{eqn: ALM opt prob}
\begin{aligned}
    \underset{\Omega\subset D}{\min}~
      &\mathcal{L}(\Omega)\coloneqq J(\Omega)-\sum_{i=1}^{N}\left\{\lambda_iC_i(\Omega)+\frac{1}{2}\Lambda_iC_i(\Omega)^2\right\}
\end{aligned}
\end{equation}
where $\mathcal{L}$ is called the augmented Lagrangian functional and has shape derivative
\begin{equation}
    \mathcal{L}^\prime(\Omega)(\boldsymbol{\theta}) = J^\prime(\Omega)(\boldsymbol{\theta})-\sum_{i=1}^{N}\left[\lambda_i-\Lambda_iC_i(\Omega)\right]C_i^\prime(\Omega)(\boldsymbol{\theta}).
\end{equation}
In the above $\lambda_i$ and $\Lambda_i$ are called the Lagrange multipliers and penalty parameters, respectively. A new estimate of the Lagrange multipliers $\lambda_i$ at iteration $k$ is given by
\begin{equation}
    \lambda_i^{k+1}=\lambda_i^k-\Lambda_iC_i(\Omega^k)
\end{equation}
where $\Omega^k$ is the domain at iteration $k$. In addition, the penalty parameters $\Lambda_i$ are updated periodically according to
\begin{equation}
    \Lambda_i^{k+1}= \min(\xi\Lambda_i^{k},\Lambda_\textrm{max})
\end{equation}
for $\xi>1$. We can then determine a new descent direction at iteration $k$ using the Hilbertian extension-regularisation framework. That is, we solve: \textit{Find $l_\Omega^{k+1}\in H$ such that}
\begin{equation}\label{eqn: ALM ext shape sens}
    \langle l_\Omega^{k+1},w\rangle_H=\mathcal{L}^{\prime}(\Omega^k)(-w\boldsymbol{n})~\forall w\in H
\end{equation}
to determine a new descent direction $l_\Omega^{k+1}$ that improves the augmented Lagrangian $\mathcal{L}$.

We note that there are several other methods that consider constrained topology optimisation such as Lagrangian methods \citep{10.1016/j.jcp.2003.09.032_2004,10.1007/s00158-018-1950-2_2018}, sequential linear programming methods \citep{10.1179/1743284715Y.0000000022_2015,10.1007/s00158-014-1174-z_2015,10.1016/bs.hna.2020.10.004_978-0-444-64305-6_2021}, and null space/projection methods \citep{10.3934/dcdsb.2019249,10.1051/cocv/2020015_2020,10.1007/s00158-023-03663-0_2023}. Our package GridapTopOpt also implements a Hilbertian projection method as described by \cite{10.1007/s00158-023-03663-0_2023}.

\subsection{The level set method}\label{sec: lsm}
We consider tracking and updating a subdomain $\Omega$ over a fixed computational domain $D$ using a level set function $\phi:D\rightarrow\mathbb{R}$ \citep{978-0-521-57202-6_1996,978-0-387-22746-7_2006}. This approach corresponds to the so-called \textit{Eulerian} shape capturing approach as discussed by \cite{10.1016/bs.hna.2020.10.004_978-0-444-64305-6_2021}. Other methods such as the \textit{Lagrangian} shape capturing approach use re-meshing algorithms to exactly discretise an interface \citep{10.1016/j.cma.2014.08.028_2014}. These are particularly useful for fluid-structure interaction problems.

\subsubsection{Domain tracking and integration}
The level set function $\phi$ is typically defined as 
\begin{equation}\label{eqn: lsf def}
\begin{cases}\phi(\boldsymbol{x})<0&\text{if } \boldsymbol{x} \in \Omega, \\ \phi(\boldsymbol{x})=0 &\text{if } \boldsymbol{x} \in \partial \Omega, \\\phi(\boldsymbol{x})>0 &\text{if } \boldsymbol{x} \in D \backslash \bar{\Omega}.\end{cases}
\end{equation}
As discussed by \cite{978-0-387-22746-7_2006}, this definition readily admits a method for integrating a function over $\Omega$ and $\partial\Omega$ in the fixed computational regime. Namely, given a function $f:\mathbb{R}^d\rightarrow\mathbb{R}$ we have:
\begin{equation}\label{eqn: lsf integration1}
    \int_{\Omega}f(\boldsymbol{x})\,\mathrm{d}\boldsymbol{x}=\int_{D}f(\boldsymbol{x})(1-H(\phi))\,\mathrm{d}\boldsymbol{x},
\end{equation}
and
\begin{equation}\label{eqn: lsf integration2}
    \int_{\partial\Omega}f(\boldsymbol{x})\,\mathrm{ds}=\int_{D}f(\boldsymbol{x})H^\prime(\phi)\lvert\boldsymbol{\nabla}\phi\rvert\,\mathrm{d}\boldsymbol{x},
\end{equation}
where $H$ is a Heaviside function defined as 
\begin{equation}
    H(\phi)=
    \begin{cases}
        0&\text{if }\phi\leq0,\\
        1&\text{if }\phi>0.
    \end{cases}
\end{equation}

It is common to avoid numerical and integrability issues in the above by approximating the Heaviside function $H$ with a smooth Heaviside function $H_\eta$. A possible choice for this function is given by \citep{978-0-387-22746-7_2006}
\begin{equation}\label{eqn: smoothed heavi}
    H_\eta(\phi)=\begin{cases}
    0&\text{if }\phi<-\eta,\\
    \frac{1}{2}+\frac{\phi}{2\eta}+\frac{1}{2\pi}\sin\left(\frac{\pi\phi}{\eta}\right)&\text{if }\lvert\phi\rvert\leq\eta,\\
    1&\text{if }\phi>\eta,
    \end{cases}
\end{equation}
where $\eta$ is half the width of the small transition region between 0 and 1. 

In practice, using Eq. \ref{eqn: lsf integration1} for the solution of state equations in a finite element method yields a singular stiffness matrix \citep{10.1016/j.jcp.2003.09.032_2004}. A popular method to deal with this situation is the ersatz material approximation \citep[e.g.,][]{10.1016/j.jcp.2003.09.032_2004} where the void phase is filled with a soft material so that the state equations can be solved without a body-fitted mesh. To this end, we also introduce a smooth characteristic function $I:\mathbb{R}\rightarrow[\epsilon,1]$ defined as
\begin{equation}\label{eqn: erstaz mat}
    I(\phi)=(1-H_\eta(\phi))+\epsilon H_\eta(\phi),
\end{equation}
where $\epsilon\ll1$. This will be used in Section \ref{sec: examples}. In future we hope to better track the domain boundary using unfitted finite element methods.

\subsubsection{Level set update}\label{sec: ls update}
To evolve the level set function under a normal velocity $v(\boldsymbol{x})$, we solve the Hamilton-Jacobi evolution equation \citep{978-0-521-57202-6_1996,978-0-387-22746-7_2006,10.1016/bs.hna.2020.10.004_978-0-444-64305-6_2021}:
\begin{align}\label{eqn: HJ}
    \begin{cases}
    \pderiv{\phi}{t}(t,\boldsymbol{x}) + v(\boldsymbol{x})\lvert\boldsymbol{\nabla}\phi(t,\boldsymbol{x})\rvert = 0,\\
    \phi(0,\boldsymbol{x})=\phi_0(\boldsymbol{x}),\\
    \boldsymbol{x}\in D,~t\in(0,T),
    \end{cases}
\end{align}
where $T>0$ is small and $\phi_0$ is the initial state of $\phi$ at $t=0$. In the context of topology optimisation using the augmented Lagrangian method, the normal velocity $v$ is given by the extended shape sensitivity $l_\Omega^{k}$ as defined by Eq. \ref{eqn: ALM ext shape sens} at an intermediate iteration $k$. Solving Eq. \ref{eqn: HJ} using this choice of normal velocity provides a new domain that reduces the Lagrangian to first order.

\subsubsection{Level set reinitialisation}\label{sec: ls reinit}
To ensure that the level set function is neither too steep nor too flat near the boundary of $\Omega$, it is useful to reinitialise the level set function as a signed distance function $d_\Omega$ \citep{978-0-387-22746-7_2006}. For the domain $\Omega$ the signed distance function is defined as \citep{10.1016/bs.hna.2020.10.004_978-0-444-64305-6_2021}:
\begin{equation}
d_{\Omega}(\boldsymbol{x})=\begin{cases}
-d(\boldsymbol{x}, \partial \Omega)&\text{if }\boldsymbol{x} \in \Omega, \\
0&\text{if } \boldsymbol{x} \in \partial \Omega, \\
d(\boldsymbol{x}, \partial \Omega)&\text{if } \boldsymbol{x} \in D \backslash \bar{\Omega},
\end{cases}
\end{equation}
where $d(\boldsymbol{x}, \partial \Omega):=\min _{\boldsymbol{p} \in \partial \Omega}\lvert\boldsymbol{x}-\boldsymbol{p}\rvert$ is the minimum Euclidean distance from $\boldsymbol{x}$ to the boundary $\partial \Omega$. 

In this work we use the reinitialisation equation to reinitialise a pre-existing level set function $\phi_0(\boldsymbol{x})$ as the signed distance function \citep{10.1006/jcph.1999.6345_1999,978-0-387-22746-7_2006}. This is given by
\begin{align}\label{eqn: reinit}
\begin{cases}
    \pderiv{\phi}{t}(t,\boldsymbol{x}) + S(\phi_0(\boldsymbol{x}))\left(\lvert\boldsymbol{\nabla}\phi(t,\boldsymbol{x})\rvert-1\right) = 0,\\
    \phi(0,\boldsymbol{x})=\phi_0(\boldsymbol{x}),\\
    \boldsymbol{x}\in D, t>0.
\end{cases}
\end{align}
where $S$ is the sign function. In this work we approximate $S$ using
\begin{equation}\label{eqn: approx sign}
    S(\phi)=\frac{\phi}{\sqrt{\phi^2+\epsilon_s^2\lvert\nabla\phi\rvert^2}},
\end{equation}
where $\epsilon_s$ is the minimum element side length \citep{10.1006/jcph.1999.6345_1999}. Note that $S(\phi)$ should be updated at every time-step when solving Eq. \ref{eqn: reinit} \citep{978-0-387-22746-7_2006}.

\subsection{Automatic differentiation}\label{sec: ad impl}
Owing to our use of a fixed computational domain $D$, we implement automatic differentiation with respect to the level set function $\phi$ using the adjoint method \citep[][Sec. 5.2.5]{10.1093/acprof:oso/9780199546909.001.0001_2009}. This is closely related to C\'ea's method \citep{10.1051/m2an/1986200303711_1986}.
In particular, we consider differentiation of a functional $F(u(\phi),\phi)$ with respect to $\phi$ where $u$ satisfies a PDE residual $\mathcal{R}(u(\phi),\phi)=0$. The derivative $\D{F}{\phi}$ is then given by
\begin{equation}
    \D{F}{\phi}=\pderiv{F}{\phi}+\pderiv{F}{u}\pderiv{u}{\phi}.
\end{equation}
where differentiation through the residual $\mathcal{R}$ yields $\pderiv{F}{u}\pderiv{u}{\phi}=-\lambda^\intercal\pderiv{\mathcal{R}}{\phi}$ for $\lambda$ satisfying the adjoint equation
\begin{equation}\label{eqn: adjoint eqn}
    \pderiv{\mathcal{R}}{u}^\intercal\lambda=\pderiv{F}{u}^\intercal.
\end{equation}
By leveraging the automatic differentiation capabilities implemented in  the Julia finite element package Gridap \citep{Verdugo2022} through ForwardDiff \citep{RevelsLubinPapamarkou2016}, all the additional quantities (e.g., $\pderiv{\mathcal{R}}{u}$) are readily computed. Note that the above can readily be expressed in terms of Fr\'echet derivatives in $\phi$, however we do not do this here to avoid introducing further notation.

This discussion corresponds to a `discretise-then-optimise' paradigm \citep{10.1016/bs.hna.2020.10.004_978-0-444-64305-6_2021}, as opposed to the mathematical discussion in Section \ref{sec: shape diff} where shape derivatives are first found analytically.
It has been shown previously by \cite{10.1016/S0045-7825(02)00559-5_2003} that Fr\'echet derivatives in $\phi$ provide suitable descent directions via velocity extension methods similar to the Hilbertian extension-regularisation method discussed in Section \ref{sec: descent dir}.
In fact one can show the following result:
\begin{proposition}
    Suppose that we have $\Omega\subset D$ and $f\in W^{1,1}(\mathbb{R}^d)$. In addition, suppose that we have a level set function $\phi$ that satisfies $\lvert\boldsymbol{\nabla}\phi\rvert=1$ almost everywhere on $D$. Define
    \begin{equation}\label{prop: eqn1}
        J(\Omega)=\int_\Omega f~\mathrm{d}\boldsymbol{x}
    \end{equation}
    and the corresponding relaxation over $D$ as
    \begin{equation}\label{prop: eqn2}
        \hat{J}(\phi)=\int_D (1-H(\phi))f~\mathrm{d}\boldsymbol{x}.
    \end{equation}
    Then $J^\prime(\Omega)(-v\boldsymbol{n})=\hat{J}^\prime(\phi)(v)$ when the shape derivative of $J$ is relaxed over $D$.
    \end{proposition}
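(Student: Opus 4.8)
The plan is to evaluate both sides of the claimed identity as explicit integrals and verify they agree, with the signed-distance hypothesis $\lvert\boldsymbol{\nabla}\phi\rvert = 1$ entering precisely at the point where a surface measure on $\partial\Omega$ is rewritten as a volume integral over $D$.

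First I would compute the left-hand side. Since the integrand $f$ does not depend on $\Omega$, the shape derivative of the volume functional in Eq.~\eqref{prop: eqn1} is classical: writing $\Omega_{\boldsymbol{\theta}}=(\boldsymbol{I}+\boldsymbol{\theta})(\Omega)$, a change of variables followed by differentiation at $\boldsymbol{\theta}=\boldsymbol{0}$ yields $J^\prime(\Omega)(\boldsymbol{\theta})=\int_\Omega\operatorname{div}(f\boldsymbol{\theta})\,\mathrm{d}\boldsymbol{x}$. The assumption $f\in W^{1,1}(\mathbb{R}^d)$ together with $\boldsymbol{\theta}\in W^{1,\infty}$ ensures $f\boldsymbol{\theta}\in W^{1,1}$, so the divergence theorem applies and gives the Hadamard boundary form $J^\prime(\Omega)(\boldsymbol{\theta})=\int_{\partial\Omega}f\,\boldsymbol{\theta}\cdot\boldsymbol{n}\,\mathrm{ds}$. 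Evaluating at $\boldsymbol{\theta}=-v\boldsymbol{n}$ and using $\boldsymbol{n}\cdot\boldsymbol{n}=1$ on $\partial\Omega$ gives $J^\prime(\Omega)(-v\boldsymbol{n})=-\int_{\partial\Omega}fv\,\mathrm{ds}$. Relaxing this boundary integral over $D$ through Eq.~\eqref{eqn: lsf integration2} applied to the integrand $fv$ produces $-\int_D fv\,H^\prime(\phi)\lvert\boldsymbol{\nabla}\phi\rvert\,\mathrm{d}\boldsymbol{x}$, and the signed-distance condition $\lvert\boldsymbol{\nabla}\phi\rvert=1$ almost everywhere removes the gradient factor to leave $-\int_D fv\,H^\prime(\phi)\,\mathrm{d}\boldsymbol{x}$.

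For the right-hand side I would differentiate the relaxation in Eq.~\eqref{prop: eqn2} directly. Perturbing $\phi\mapsto\phi+\epsilon v$ and differentiating under the integral sign, the chain rule applied to $1-H(\phi)$ gives $\hat{J}^\prime(\phi)(v)=-\int_D f\,H^\prime(\phi)\,v\,\mathrm{d}\boldsymbol{x}$, which coincides with the relaxed left-hand side obtained above and so closes the argument.

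The main obstacle is making rigorous the manipulations involving the non-smooth Heaviside $H$, whose derivative $H^\prime$ is not a classical function but a Dirac measure concentrated on $\{\phi=0\}=\partial\Omega$. I would justify this in one of two equivalent ways: either carry out the entire computation with the smoothed Heaviside $H_\eta$ of Eq.~\eqref{eqn: smoothed heavi}, using dominated convergence to pass to the limit $\eta\to 0$; or interpret $H^\prime(\phi)\,\mathrm{d}\boldsymbol{x}$ distributionally through the coarea formula, under which $\int_D g\,H^\prime(\phi)\,\mathrm{d}\boldsymbol{x}=\int_{\partial\Omega}g\,\lvert\boldsymbol{\nabla}\phi\rvert^{-1}\,\mathrm{ds}$, so that the signed-distance hypothesis is again exactly what identifies the level-set surface measure with the true surface measure on $\partial\Omega$. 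In either route, the regularity $f\in W^{1,1}(\mathbb{R}^d)$ is what guarantees a well-defined trace of $f$ on $\partial\Omega$ via the trace theorem, ensuring that both the boundary integral and the surface-measure pairing are meaningful and that the interchange of limit and integration is legitimate.
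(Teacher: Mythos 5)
Your proof follows essentially the same route as the paper's: reduce $J^\prime(\Omega)(-v\boldsymbol{n})$ to the Hadamard boundary form, relax it over $D$ via the level-set surface-integral identity so a factor $H^\prime(\phi)\lvert\boldsymbol{\nabla}\phi\rvert$ appears, compute the Fr\'echet derivative of $\hat{J}$ directly to get the factor $H^\prime(\phi)$ alone, and equate the two using $\lvert\boldsymbol{\nabla}\phi\rvert=1$ almost everywhere. The only differences are that you derive the Hadamard formula from first principles where the paper simply cites Lemma 4 of Allaire et al.\ (2004), and you add a justification of the distributional handling of $H^\prime$ (smoothed Heaviside limit or coarea formula); both are sound elaborations rather than a different argument.
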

    \begin{proof}
    Applying Lemma 4 \citep{10.1016/j.jcp.2003.09.032_2004} to Eq. \ref{prop: eqn1}, relaxing the shape derivative over $D$ using Eq. \ref{eqn: lsf integration2}, and taking $\boldsymbol{\theta}=-v\boldsymbol{n}$ gives 
    \begin{equation}\label{eqn:pf1}
        J^\prime(\Omega)(-v\boldsymbol{n})=-\int_{D}vfH^\prime(\phi)\lvert\boldsymbol{\nabla}\phi\rvert~\mathrm{d}\boldsymbol{x}.
    \end{equation}
    On the other hand, a Fr\'echet derivative of $\hat{J}$ in the direction $v$ gives
    \begin{equation}\label{eqn:pf2}
        \hat{J}^\prime(\phi)(v)=-\int_D v f H^\prime(\phi)~\mathrm{d}\boldsymbol{x}.
    \end{equation}
    Equality of Eq. \ref{eqn:pf1} and \ref{eqn:pf2} follows when $\lvert\boldsymbol{\nabla}\phi\rvert=1$ on $D$ except possibly on a set $\Sigma$ of measure zero.
\end{proof}
\noindent Therefore, it is possible to recover the shape derivative $J^\prime$ when relaxed over $D$ using automatic differentiation to compute $\hat{J}^\prime$ as described above. We have also verified this result numerically. The analogous result does not hold for integrals over $\partial\Omega$ (e.g., Lemma 5, \cite{10.1016/j.jcp.2003.09.032_2004}) and a descent direction is not immediately clear in this case. This is remedied by the Hilbertian extension-regularisation method as it provides a descent direction that reduces the objective to first order.

It should be noted that true automatic shape differentiation is possible over a body-fitted mesh \citep[e.g.,][]{Ham_Mitchell_Paganini_Wechsung_2019,Gangl_Sturm_Neunteufel_Schöberl_2021}. In future we plan to investigate the alternate use of unfitted finite element methods for the automatic calculation of the shape derivatives as in \cite{mallon2024neural}, where, similar to using a body-fitted mesh, the domain is segmented by an infinitely sharp interface.

\section{Implementation}\label{sec: impl}

In the following we provide a high-level discussion on the implementation of GridapTopOpt. We begin by discussing distributed computing then discuss package design and installation.

\subsection{Distributed computing}
To efficiently tackle the high mesh resolutions required by real-world applications, our code makes use of distributed-memory parallelism. Within this computational paradigm, each processor holds a portion of the computational domain and performs local operations on it. Global behavior is achieved by sharing local information between processors through so-called messages. The Message Passing Interface (MPI) \citep{mpi41} provides a well-established standard for these message-passing operations, and is supported by Julia through the package MPI.jl \citep{Byrne2021}.

Due to the reduced compact support of most functional basis used in FEs, this parallel programming model is quite well suited for the FEM framework. Indeed, most computations (i.e creation of the FE spaces, evaluation of FE functions, and integration of weak forms) only require access of information from the local portion of the mesh and a single-depth layer of cells belonging to other processors. This extra layer, called the ghost layer, provides the global consistency.

In this context, Gridap provides abstract interfaces for the FE solution of PDEs, as well as serial implementations of these abstractions. GridapDistributed \citep{Badia2022} implements these abstractions for distributed-memory environments, by leveraging the serial code in Gridap to handle the local portion on each parallel task. This is all possible thanks to the multiple-dispatch \citep{doi:10.1137/141000671} features available in Julia, which allow the compiler to efficiently choose the function implementation based on argument type (see examples below).

PartitionedArrays \citep{PartitionedArrays_Verdugo2021} is used to handle global data distribution layout and communication among tasks, without having to resort to the low-level MPI interface provided by Julia. PartitionedArrays also provides a parallel implementation of partitioned global linear systems (i.e., linear algebra vectors and sparse matrices).

The most important abstraction provided by PartitionedArrays is the concept of distributed arrays, where each index of the array belongs to a different processor. Within this abstraction, any distributed object can be though of as a wrapper around a distributed array of serial objects and a glue that allows these serial objects to stay consistent, like shown in the following snippet:

\begin{jlcode}
struct Object
  ## fields
end

struct DistributedObject
  objects :: MPIArray{Object}
  glue
end
\end{jlcode}

For instance, a \jl{DistributedDiscreteModel} contains a distributed array of \jl{DiscreteModel}s and a distributed global numbering for the cells in each local portion, represented by a \jl{PRange}. Similarly, a \jl{PVector} (i.e distributed vector) contains a distributed array of \jl{Vector}s and a distributed global numbering for the vector indices.

An important programming pattern is then provided by the \jl{map} function, which allows running serial code on each local portion in parallel, potentially followed by communications to keep a global consistency. Thanks to multiple-dispatch, the same method name can be used for both serial and parallel implementations, allowing for the following snippet:

\begin{jlcode}
function op(obj::Object)
  # Local operations
  return result
end

function op(obj::DistributedObject)
  result = map(op,local_views(obj))
  # Additional parallel operations, 
  #  if required
end
\end{jlcode}

In the above example, the function \jl{op} has two implementations. The first one takes a serial object \jl{obj} and does the work required within each processor without any communication involved. The second expects a distributed object \jl{obj}, containing a distributed array of \jl{Object} that can be retrieved through the method \jl{local_views}. The serial code is then run on the processor by mapping the operation onto each local portion, which will return a distributed array containing the serial results in each processor. If necessary, global sharing of information is performed afterwards.

\subsection{Package design}\label{sec: pkg design}
GridapTopOpt is designed to be a readily extendable object-oriented toolbox that can be called in serial or parallel by a user via a \textit{driver} script (see Appendix \ref{sec: code appendix}). Figure \ref{fig:core-package-structure} shows an illustration of this workflow along with the breakdown of core source files. 
\begin{figure}[t]
    \centering
    \includegraphics[width=0.6\columnwidth]{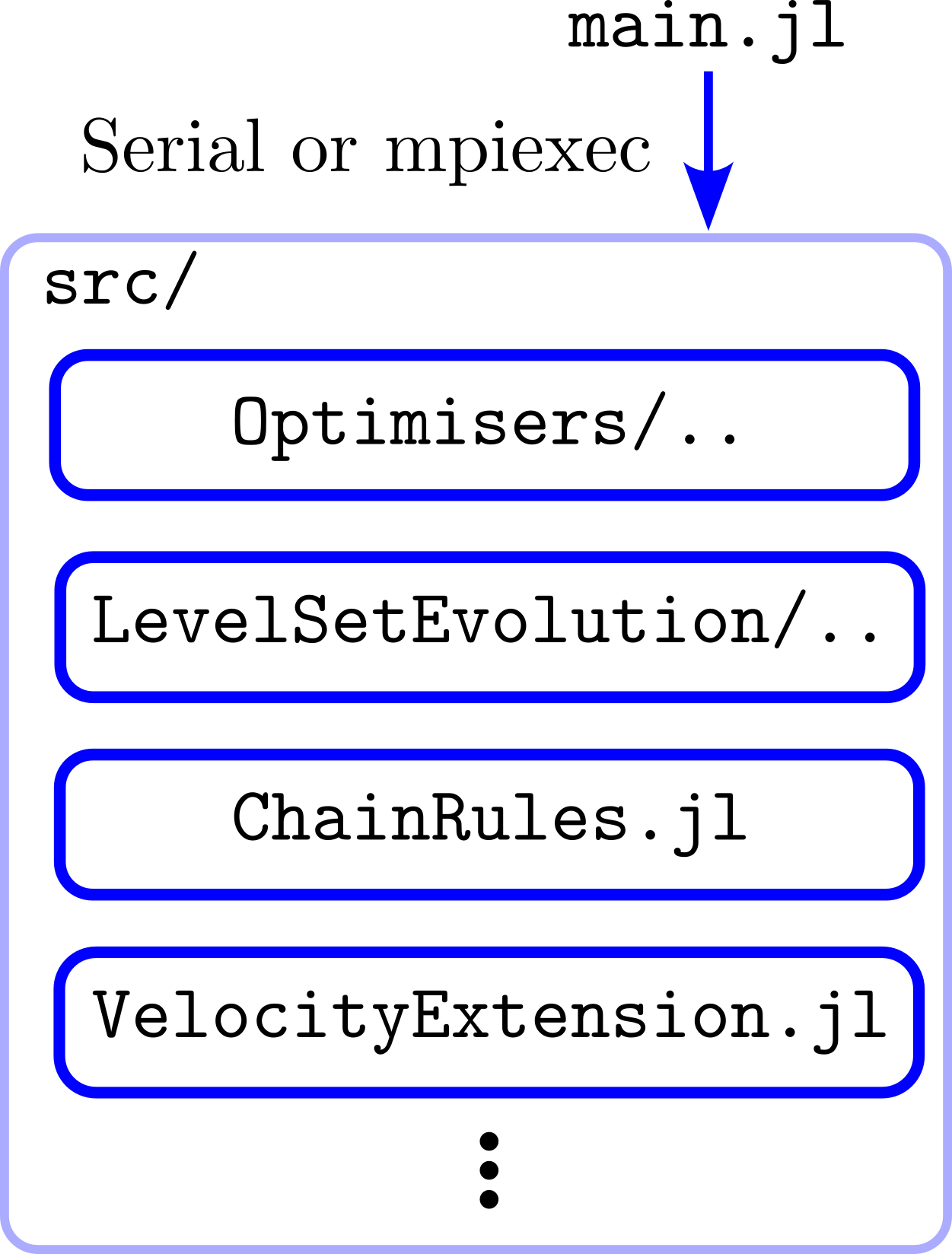}
    \caption{The package design of GridapTopOpt. The user-provided driver script \texttt{main.jl} implements the optimisation problem using the API provided in \texttt{src/}. The driver can then be run in serial or parallel.}
    \label{fig:core-package-structure}
\end{figure}
We segment the source code into distinct blocks so that new functionality can be added by modifying the source code in a file or alternatively outside the package by extending abstract interfaces. The source file contents are organised as follows:
\begin{itemize}
    \item[--] \texttt{Optimisers/..:} Contains an implementation of the augmented Lagrangian method \citep{978-0-387-30303-1_2006} as described in Section \ref{sec: descent dir} as well as an implementation of the Hilbertian projection method \citep{10.1007/s00158-023-03663-0_2023}. In addition, we include abstract types and interfaces that can be implemented to create a new optimiser.
    \item[--] \texttt{LevelSetEvolution/..:} Solves the Hamilton-Jacobi evolution equation and reinitialisation equation in parallel on $n$th order finite elements using a first-order upwind finite difference scheme \citep{10.1006/jcph.1999.6345_1999,10.1016/j.jcp.2003.09.032_2004,978-0-387-22746-7_2006}. We also provide abstract types and interfaces that can be implemented to create a new level set evolution method (e.g., Reaction-Diffusion-based methods).
    \item[--] \texttt{ChainRules.jl:} Implements the solution to the state equations for different types of residuals (e.g., affine or nonlinear) and enables automatic differentiation of optimisation functionals.
    \item[--] \texttt{VelocityExtension.jl:} Implements the Hilbertian extension-regularisation method \citep{10.1016/bs.hna.2020.10.004_978-0-444-64305-6_2021} for a user-defined Hilbert space and inner product.
\end{itemize}
Note that we omit source files from this list that provide additional utility (e.g., \texttt{Benchmarks.jl}). Additional information regarding these files can be found in the package documentation. We further note that all source code is written in Julia and does not resort to using other programming languages, thus avoiding the two-language barrier often associated with high performance computing \citep{doi:10.1137/141000671}. In addition, automatic differentiation in multi-language implementations becomes difficult or even impossible as the propagation of dual numbers through codes is interrupted without a proper interface.

\subsection{Package installation}
GridapTopOpt can be installed in an existing Julia environment using the package manager. This can be accessed in the Julia REPL (read-eval–print loop) by pressing \jl{]}. We then add the required core packages via:
\begin{jlcode}
pkg> add GridapTopOpt, Gridap
\end{jlcode}
Once installed, serial driver scripts can be run immediately, whereas parallel problems also require an MPI installation and some additional packages. For basic users, MPI.jl provides such an implementation and a Julia wrapper for \jl{mpiexec} - the MPI executor. This is installed via:
\begin{jlcode}
pkg> add MPI, GridapDistributed, GridapPETSc, GridapSolvers, PartitionedArrays, SparseMatricesCSR
julia> using MPI
julia> MPI.install_mpiexecjl()
\end{jlcode}
Once the \jl{mpiexecjl} wrapper has been added to the system \jl{PATH}, MPI scripts can be executed in a terminal via
\begin{jlcode}
mpiexecjl -n P julia main.jl
\end{jlcode}
where \jl{main} is a \textit{driver} script, and \jl{P} denotes the number of processors.

\section{Numerical examples}\label{sec: examples}

In this section we demonstrate the capabilities of GridapTopOpt by describing the mathematical formulation and driver scripts for several example topology optimisation problems. The driver scripts for all problems are given in the GitHub repository where our source code is readily available. See the \hyperref[rep results]{Replication of Results} section at the end of the manuscript for details.

\subsection{Minimum thermal compliance}
In our first example we consider minimising the thermal compliance (or dissipated energy) of heat transfer through a domain $\Omega$. A similar problem was considered by \cite{10.1016/j.cma.2006.08.005_2007}. 
\subsubsection{Formulation}
Suppose that we consider a two-dimensional problem on $D=[0,1]^2$. We can write down the state equations describing this problem using the homogeneous steady-state heat equation as
\begin{equation}\label{eqn: heat eqn}
\left\{\begin{aligned}
-\boldsymbol{\nabla}(\kappa\boldsymbol{\nabla} u) &= 0~\text{in }\Omega,\\
\kappa\boldsymbol{\nabla} u\cdot\boldsymbol{n} &= g~\text{on }\Gamma_N,\\
\kappa\boldsymbol{\nabla} u\cdot\boldsymbol{n} &= 0~\text{on }\Gamma,\\
u &= 0~\text{on }\Gamma_D.
\end{aligned}\right.
\end{equation}
where $\kappa$ is the thermal diffusivity within $\Omega$ and $\boldsymbol{n}$ is the unit normal on the boundary $\partial\Omega$. Physically we can imagine this as describing the transfer of heat through the domain $\Omega$ from the source to sinks under the following boundary conditions:
\begin{figure}[t]
    \centering
    \def\svgwidth{0.8\columnwidth}
    \large
\begingroup%
  \makeatletter%
  \providecommand\color[2][]{%
    \errmessage{(Inkscape) Color is used for the text in Inkscape, but the package 'color.sty' is not loaded}%
    \renewcommand\color[2][]{}%
  }%
  \providecommand\transparent[1]{%
    \errmessage{(Inkscape) Transparency is used (non-zero) for the text in Inkscape, but the package 'transparent.sty' is not loaded}%
    \renewcommand\transparent[1]{}%
  }%
  \providecommand\rotatebox[2]{#2}%
  \newcommand*\fsize{\dimexpr\f@size pt\relax}%
  \newcommand*\lineheight[1]{\fontsize{\fsize}{#1\fsize}\selectfont}%
  \ifx\svgwidth\undefined%
    \setlength{\unitlength}{252.57236897bp}%
    \ifx\svgscale\undefined%
      \relax%
    \else%
      \setlength{\unitlength}{\unitlength * \real{\svgscale}}%
    \fi%
  \else%
    \setlength{\unitlength}{\svgwidth}%
  \fi%
  \global\let\svgwidth\undefined%
  \global\let\svgscale\undefined%
  \makeatother%
  \begin{picture}(1,0.9954271)%
    \lineheight{1}%
    \setlength\tabcolsep{0pt}%
    \put(0.27123401,0.70152071){\color[rgb]{0,0,0}\makebox(0,0)[lt]{\lineheight{12}\smash{\begin{tabular}[t]{l}$\Omega$\end{tabular}}}}%
    \put(0,0){\includegraphics[width=\unitlength,page=1]{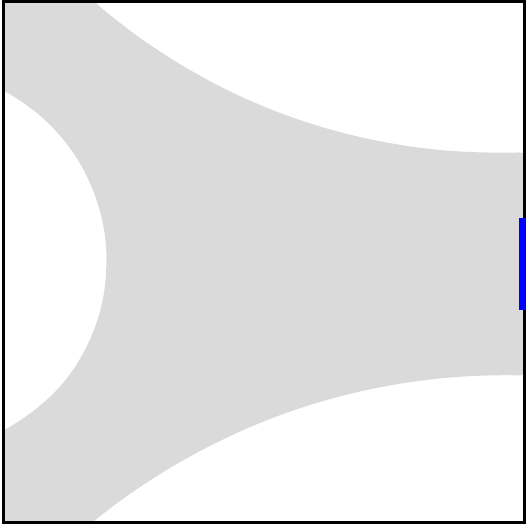}}%
    \put(0.48565109,0.88846041){\color[rgb]{0,0,0}\makebox(0,0)[lt]{\lineheight{12}\smash{\begin{tabular}[t]{l}$D\setminus\Omega$\end{tabular}}}}%
    \put(0.03014374,0.89470504){\color[rgb]{0,0,0}\makebox(0,0)[lt]{\smash{\begin{tabular}[t]{l}$\Gamma_{D}$\end{tabular}}}}%
    \put(0.83010556,0.48834122){\color[rgb]{0,0,0}\makebox(0,0)[lt]{\smash{\begin{tabular}[t]{l}$\Gamma_{N}$\end{tabular}}}}%
    \put(0.04796044,0.0850855){\color[rgb]{0,0,0}\makebox(0,0)[lt]{\smash{\begin{tabular}[t]{l}$\Gamma_{D}$\end{tabular}}}}%
    \put(0,0){\includegraphics[width=\unitlength,page=2]{_Figure2.pdf}}%
  \end{picture}%
\endgroup%

    \caption{A schematic for the two-dimensional minimum thermal compliance problem.}
    \label{fig:thermal2d-setup}
\end{figure}
\begin{itemize}
    \item[--] \textit{Heat source:} normal heat flow $g$ across fixed $\Gamma_{N}$.
    \item[--] \textit{Isothermal:} zero heat on fixed $\Gamma_D$.
    \item[--] \textit{Adiabatic:} zero normal heat flow across $\Gamma$ where $\Gamma\coloneqq\partial\Omega\setminus(\Gamma_N\cup\Gamma_D)$,
\end{itemize} 
The weak formulation of the above strong formulation can be found by multiplying by a test function $v$ and applying integration by parts. The weak formulation is given by: \textit{Find} $u\in H^1_{\Gamma_D}(\Omega)$ \textit{such that}
\begin{equation}\label{eqn: thermal weak form}
\int_{\Omega}\kappa\boldsymbol{\nabla}u\cdot\boldsymbol{\nabla}v~\mathrm{d}\boldsymbol{x} = \int_{\Gamma_N}gv~\mathrm{d}s,~\forall v\in H^1_{\Gamma_D}(\Omega)
\end{equation}
\textit{where} $H^1_{\Gamma_D}(\Omega)=\{v\in H^1(\Omega):~v=0\textit{ on }\Gamma_D\}$. 

The minimum thermal compliance optimisation problem is then given by
\begin{equation}
    \begin{aligned}
    \min_{\Omega\in{D}}&~J(\Omega)\coloneqq\int_{\Omega}\kappa\boldsymbol{\nabla}u\cdot\boldsymbol{\nabla}u~\mathrm{d}\boldsymbol{x}\\
    \text{s.t. }&~C(\Omega)=0,\\
    &~a(u,v)=l(v),~\forall v\in H^1_{\Gamma_D}(\Omega),
    \end{aligned}
\end{equation}
where $C(\Omega)\coloneqq\operatorname{Vol}(\Omega)-V_f\operatorname{Vol}(D)$ with $\operatorname{Vol}(\Omega)=\int_\Omega1~\mathrm{d}\boldsymbol{x}$ and similarly for $\operatorname{Vol}(D)$, and $V_f$ is the specified required volume fraction. We introduce a volume-fraction constraint because the solution to the unconstrained problem is a solid plate. It can be shown that the shape derivatives of $J(\Omega)$ and $\operatorname{Vol}(\Omega)$ are given respectively by
\begin{equation}\label{eqn: shape derive thermal comp}
J'(\Omega)(\boldsymbol{\theta}) = -\int_{\Gamma}\kappa\boldsymbol{\nabla}u\cdot\boldsymbol{\nabla}u~\boldsymbol{\theta}\cdot\boldsymbol{n}~\mathrm{d}s
\end{equation}
and
\begin{equation}\label{eqn: shape deriv vol}
\operatorname{Vol}'(\Omega)(\boldsymbol{\theta}) = \int_{\Gamma}\boldsymbol{\theta}\cdot\boldsymbol{n}~\mathrm{d}s.
\end{equation}
We show the first of these in Appendix \ref{sec: thermal shape deriv appendix} using C\'ea's method, while the latter follows from Lemma 4 \citep{10.1016/j.jcp.2003.09.032_2004}. Note that we assume $\Gamma_N$ and $\Gamma_D$ are fixed.

The problem can now be formulated in GridapTopOpt. The following sections outline the driver script provided in Appendix \ref{sec: code appendix}.

It should be noted that this problem can alternatively be posed as a multi-phase problem with a volumetric heat source over all $D$ and with $\Omega$ and $D\setminus \Omega$ containing strongly and weakly conductive material, respectively. Such multi-phase problems require special treatment in level set-based topology optimisation and we refer to \cite{10.1051/cocv/2013076_2014} for further details.

\subsubsection{Initialisation and parameters (lines 1--27)}
In line 1 we load the GridapTopOpt and Gridap packages. These provide all required functionality for the driver script as provided in Appendix \ref{sec: code appendix}. Additional packages will be required when we consider extensions of the driver.

The parameters for the finite element discretisation are given in lines 5--14. We start by specifying the order of the finite element space using \jl{order}. The  size of the bounding domain and number of elements are then specified by the tuples \jl{dom} and \jl{el_size}, respectively. Lines 9--14 then specify parameters (\jl{prop_Γ_N} \& \jl{prop_Γ_D}) and Boolean-valued indicator functions (\jl{f_Γ_N} \& \jl{f_Γ_D}) for defining the boundaries $\Gamma_N$ and $\Gamma_D$.

In lines 16--19 we specify the parameters for the finite difference update. The time step coefficient or Courant–Friedrichs–Lewy (CFL) number for the Hamilton-Jacobi evolution equation and reinitialisation equation are given by \jl{γ} and \jl{γ_reinit}, respectively. Line 18 specifies the number of time steps \jl{max_steps} for solving the Hamilton-Jacobi evolution equation. Finally we specify a maximum absolute difference stopping criterion for the reinitialisation equation with tolerance \jl{tol}. It should be noted that we scale \jl{max_steps} and \jl{tol} by the mesh size so that the same set of parameters can be used for problems with different mesh resolutions. 

We specify the problem and output I/O (in/out) parameters in lines 21--26. First, in line 21--23 we define the diffusivity \jl{κ} in $\Omega$, heat flux \jl{g} on $\Gamma_N$, and the required volume fraction \jl{vf}, respectively. The initial level set function is specified in line 24 by \jl{lsf_func}. This must be a \jl{Function} that takes a vector input and returns a scalar. We provide a convenience function \jl{initial_lsf(\ξ,a)} that returns a level set defined by
\begin{equation}
    \boldsymbol{x}\mapsto-\frac{1}{4} \prod_{i=1}^d(\cos(\xi\pi x_i)) - \frac{a}{4}
\end{equation}
where $x_i$ are the components of $\boldsymbol{x}$ and $d$ is the dimension.
Finally, in lines 25--27 we create a \jl{path} for file output and control how often visualisation files are saved using \jl{iter_mod}.

\subsubsection{Finite element setup (lines 28--43)}
In the following we setup the finite element problem using Gridap\footnote{Additional discussion/tutorials for describing finite element problems using Gridap are available at \url{https://gridap.github.io/Tutorials/stable/}.}. This provides a wide range of element types of $n$th order finite elements of various conformity. GridapTopOpt is currently implemented for $n$th order finite elements over rectangular Cartesian meshes.

We use \jl{CartesianDiscreteModel} in line 29 to create a mesh over the computational domain $D$ defined by \jl{dom} and \jl{el_size}. The function \jl{update_labels!} is then used to label parts of the mesh related to $\Gamma_D$ and $\Gamma_N$ using the indicator functions \jl{f_Γ_D} and \jl{f_Γ_D}, respectively. It should be noted that currently GridapTopOpt only supports uniform Cartesian background meshes for solving the Hamilton-Jacobi evolution equation and reinitialisation equation. In future, we plan to remove this requirement using unfitted finite element methods.

Once the model/mesh is defined we create a triangulation and quadrature for both $\Omega$ and $\Gamma_N$. These are created in lines 33--36 where \jl{2*order} indicates the desired Gaussian quadrature degree for numerical integration.

The final stage of the finite element setup is creating the discretised finite element spaces in lines 38--43. We first define a scalar-valued Lagrangian reference element. This is then used to define the test space \jl{V} and trial space \jl{U} corresponding to the discrete approximation of $H^1_{\Gamma_{D}}(\Omega)$. We then construct an FE space \jl{V_φ} over which the level set function is defined, along with an FE test space \jl{V_reg} and trial space \jl{U_reg} over which derivatives are defined. We require that \jl{V_reg} and \jl{U_reg} have zero Dirichlet boundary conditions over regions where we require the extended shape sensitivity to be zero. In general, we do this over regions where a Neumann boundary condition is being applied.

\subsubsection{Level set function (lines 44--47)}
In line 45 we create an FE function for the initial level set by interpolating \jl{lsf_func} onto the finite element space \jl{V_φ}. Figure \ref{fig:thermal2d-results-a} shows a visualisation of this initial structure. In addition, given a smoothing radius $\eta\ll1$ and ersatz material parameter $\epsilon\ll1$, we create an instance of \jl{SmoothErsatzMaterialInterpolation}. This structure provides a smoothed Heaviside function $H_\eta$ (Eqn. \ref{eqn: smoothed heavi}), its derivative $H^\prime_\eta$, the ersatz material approximation $I$ (Eqn. \ref{eqn: erstaz mat}), and the density function $\rho$. In all examples we take the smoothing radius $\eta$ to be twice the maximum side length of an element. 

\subsubsection{Optimisation problem (lines 48--59)}
Gridap provides an expressive API that allows the user to write a weak formulation with a syntax that is near one-to-one with the mathematical notation. This is made possible because Julia natively supports unicode characters\footnote{We refer readers to {\url{https://docs.julialang.org/en/v1/manual/noteworthy-differences/}} for a discussion of important differences between Julia and other programming languages, including MATLAB.} and Gridap turns some of these into Julia functions. For example, the unicode symbol \jl{∇} is an alias for the \jl{gradient} function provided by Gridap.

In line 49 and 50 we define the bi-linear and linear form, respectively for the state equations (Eqn. \ref{eqn: thermal weak form}). We utilise the ersatz material approximation via the term \jl{(I ∘ φ)*κ} where the \jl{∘} operator composes the function \jl{I} with the level set function \jl{φ}. Notice that our weak form almost exactly matches the notation in Eq. \ref{eqn: thermal weak form}. In line 51 we create an instance of \jl{AffineFEStateMap}. This structure is designed to solve the state equations and also provide the necessary ingredients to solve the adjoint problem (Eqn. \ref{eqn: adjoint eqn}) for automatic differentiation if required. By default we use a standard $LU$ solver for computing the solution to the state equations and the adjoint. Later we will see how we can adjust this to use iterative solvers from PETSc. It should be noted that in finite element codes, the biggest memory footprint is often the storage of the sparse matrix for the linear system arising from the discretisation. As a result, the use of automatic differentiation roughly doubles the memory cost for simulations. This is because an additional sparse matrix containing the adjoint system must be stored.

In GridapTopOpt we utilise Gridap's API to also define objective and constraint functionals. In this example we write the objective, constraint and shape derivatives in lines 53--57. Notice that we have slightly rewritten the constraint $C$ so that it is expressed inside the integral over $D$. We then express the shape derivatives given in Eq. \ref{eqn: shape derive thermal comp} and \ref{eqn: shape deriv vol} using the relaxation per Eq. \ref{eqn: lsf integration2}. In line 58 we combine all of these terms in an instance of \jl{PDEConstrainedFunctionals} that handles the  objective and constraints, and their analytic or automatic differentiation. In this case the analytic shape derivatives are passed as optional arguments to the constructor for \jl{PDEConstrainedFunctionals}. When these are not given, automatic differentiation in \jl{φ} is used as discussed in Section \ref{sec: ad impl}. This structure provides the \jl{evaluate!} function that computes the solution to the state equations and derivatives. In addition, the function \jl{get_state} can be used to return the current solution to the finite element problem. 

It is important to note that the measures (\jl{dΩ} and \jl{dΓ_N} in this example) must be included as arguments at the end of the weak formulation, objective, and constraints. This ensures compatibility with Gridap's automatic differentiation in parallel. It should also be noted that we do not currently support the multiplication and division of optimisation functionals. We plan to add this in a future release of GridapTopOpt. 

\subsubsection{Velocity extension (lines 60--63)}
In line 61--63 we define the velocity extension method based on the Hilbertian extension-regularisation problem defined by Eq. \ref{eqn: hilb extension wf}. We use the inner product defined by Eq. \ref{eqn: H1 inner product} with the coefficient $\alpha$ taken to be
\begin{jlcode}
α = 4max_steps*γ*maximum(get_el_Δ(model))
\end{jlcode}
This ensures that as the mesh is refined and \jl{max_steps} increased, the number of elements over which we regularise the gradient is increased. We solve the resulting system using an $LU$ solver by default. This object provides a method \jl{project!} that applies the Hilbertian velocity-extension method to a given shape derivative. 

\subsubsection{Finite difference scheme (lines 64--66)}
To solve the Hamilton-Jacobi evolution equation and reinitialisation equation we use a first order Godunov upwind difference scheme \citep{978-0-387-22746-7_2006,10.1006/jcph.1999.6345_1999}. We first create an instance of \jl{FirstOrderStencil} that represents a finite difference stencil for a single step of the Hamilton-Jacobi evolution equation and reinitialisation equation. We then create an instance of \jl{HamiltonJacobiEvolution} that enables finite differencing over the $n$th order reference finite element in serial or parallel. This object provides the methods \jl{evolve!} and \jl{reinit!} for solving the Hamilton-Jacobi evolution equation and reinitialisation equation, respectively.

Note that there are several methods available for constructing/reinitialising a signed distance function from a level set function. The reader is referred to \cite{978-0-387-22746-7_2006} and \cite{10.1016/bs.hna.2020.10.004_978-0-444-64305-6_2021} and the references therein for a detailed discussion. 

\subsubsection{Optimiser and solution (lines 67--82)}
In GridapTopOpt we implement optimisation algorithms as iterators\footnote{See \url{https://docs.julialang.org/en/v1/manual/interfaces/} for further information.} that inherit from an abstract type \jl{Optimiser}. A concrete \jl{Optimiser} implementation, say \jl{OptEg}, then implements \jl{iterate(m::OptEg)}$\mapsto$\jl{(var,state)} and \jl{iterate(m::OptEg,state)}$\mapsto$\jl{(var,state)}, where \jl{var} and \jl{state} are the available items in the outer loop and internal state of the iterator, respectively. As a result we can iterate over the object \jl{m=OptEg(...)} using \jl{for var in m}. The benefit of this implementation is that the internals of the optimisation method can be hidden in the source code while the explicit \jl{for} loop is still visible to the user. The body of the loop can then be used for auxiliary operations such as writing the optimiser history and other files.

In line 68 and 69 we create an instance of the \jl{AugmentedLagrangian} optimiser based on the discussion by \cite{978-0-387-30303-1_2006}. We slightly modify the method to include a check for oscillations of the Lagrangian using the \jl{has_oscillations} function. If oscillations are detected we reduce the CFL number \jl{γ} for the Hamilton-Jacobi evolution equation by 25\%. In addition, the default stopping criterion at an iteration $q>5$ requires that
\begin{equation}
    \frac{\lvert \mathcal{L}(\Omega^{q})-\mathcal{L}(\Omega^{q-j})\rvert}{\lvert \mathcal{L}(\Omega^{q})\rvert}< \frac{0.01\epsilon_{m}}{d-1},~\forall j = 1,\dots,5
\end{equation}
and
\begin{equation}
    \lvert C_i(\Omega^{q})\rvert < 0.01,~\forall i=1,\dots,N.
\end{equation}
where $\epsilon_{m}$ and $d$ are the maximum side length of an element and the dimension, respectively. Other defaults, such as the maximum iteration number, can also be adjusted as optional arguments (see \jl{@doc AugmentedLagrangian}).

In lines 71--75 we solve the optimisation problem and output VTU files for visualisation and a text file for the optimiser history. Lines 77--79 output a VTU for the final structure in case \jl{iszero(it \% iter_mod) != 0} on line 73 at the final iteration. Figure \ref{fig:thermal2d-results-b} shows the optimisation results for this problem using the visualisation software ParaView \citep{paraview}.

\begin{figure*}[t]
    \centering
    \begin{subfigure}{0.325\textwidth}
        \centering
        \includegraphics[width=\textwidth]{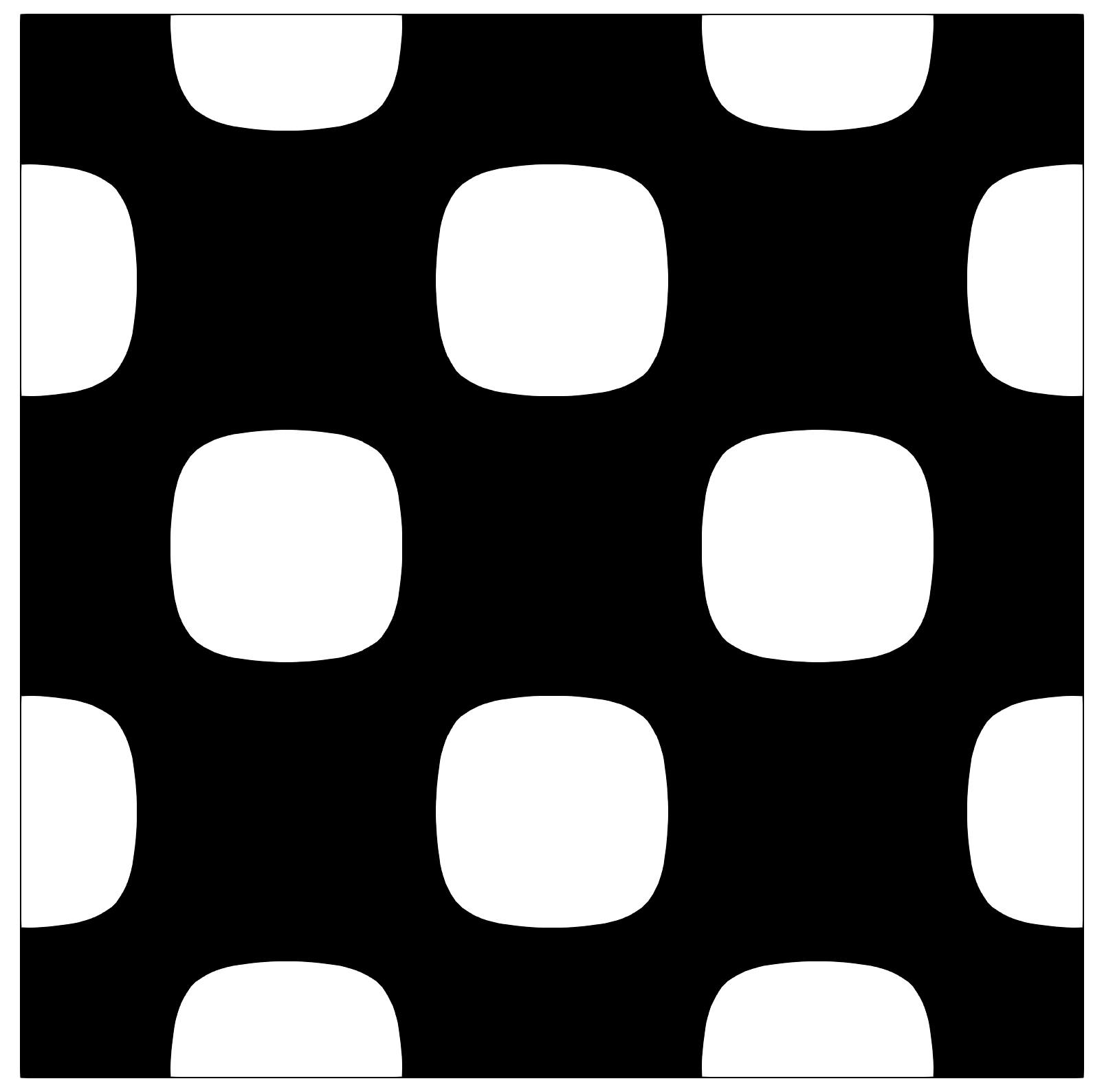}
        \caption{}
        \label{fig:thermal2d-results-a}    
    \end{subfigure}
    \begin{subfigure}{0.325\textwidth}
        \centering
        \includegraphics[width=\textwidth]{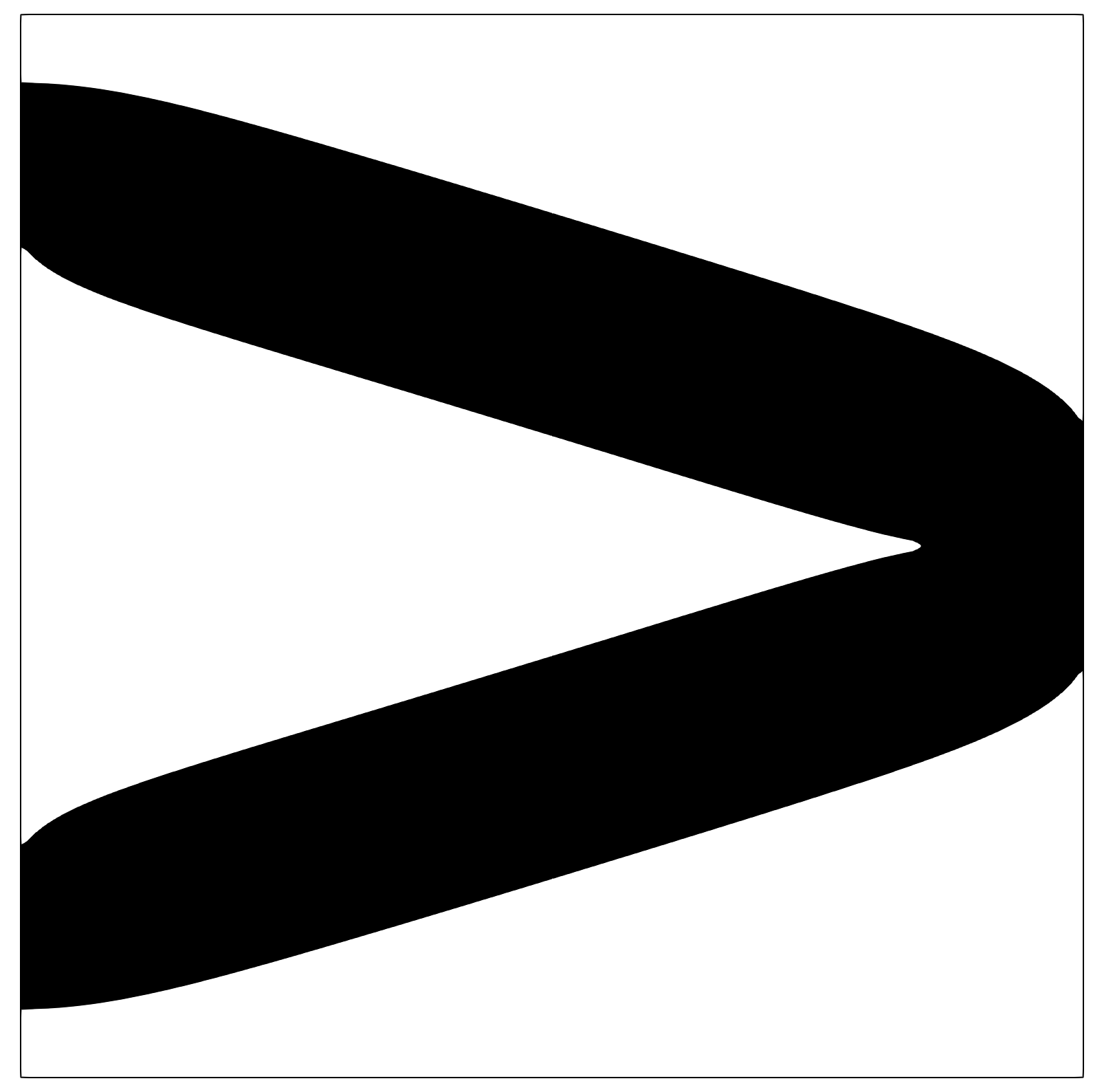}
        \caption{}
        \label{fig:thermal2d-results-b}    
    \end{subfigure}
    \begin{subfigure}{0.325\textwidth}
        \centering
        \includegraphics[width=\textwidth]{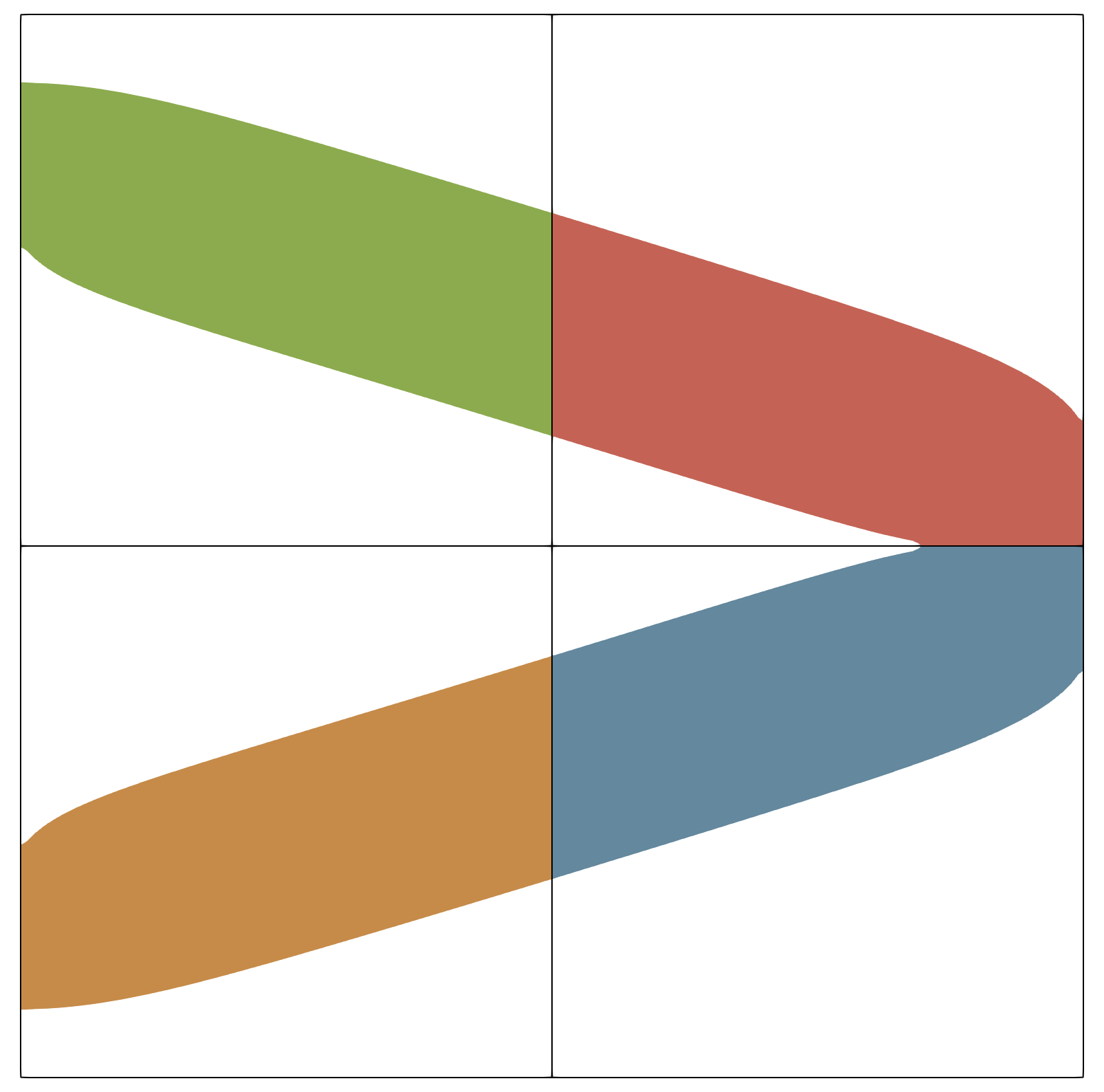}
        \caption{}
        \label{fig:thermal2d-results-c}
    \end{subfigure}
    \caption{Results for the two-dimensional minimum thermal compliance problem at 40\% volume fraction. We visualise the result using an isovolume for $\phi\leq0$. (a) shows the initial structure. (b) and (c) show the final result at convergence in serial and when distributed over a (2,2)-partition, respectively. The colours in (c) show the four partitions of the domain.}
    \label{fig:thermal2d-results}
\end{figure*}

\subsubsection{Extension: MPI \& PETSc}\label{sec: mpi petsc ext}

We now consider some minor modifications to the driver in Appendix \ref{sec: code appendix} to utilise solvers and preconditioners provided by PETSc in parallel. In GridapTopOpt we rely on the GridapPETSc \citep{GridapPETSc_Verdugo2021} satellite package to efficiently interface with the linear and nonlinear solvers provided by the PETSc library. We also utilise the SparseMatricesCSR package \citep{SparseMatrixCSR} because PETSc is based on the \jl{SparseMatrixCSR} datatype. In addition, as discussed previously for distributed computing, we utilise GridapDistributed \citep{Badia2022} and PartitionedArrays \citep{PartitionedArrays_Verdugo2021}. It should be noted that GridapPETSc can also be utilised in serial. To call these packages we change line 1 of Appendix \ref{sec: code appendix} to 
\begin{jlcode}
using GridapTopOpt, Gridap, GridapPETSc, GridapDistributed, PartitionedArrays, SparseMatricesCSR
\end{jlcode}

Before we make any changes to the body of the function \jl{main}, we adjust the end of the driver script to safely launch MPI and PETSc using Julia \jl{do} blocks. We replace line 82 with

\begin{jlcode}
with_mpi() do distribute
  mesh_partition = (2,2)
  solver_options = "-pc_type gamg -ksp_type cg -ksp_error_if_not_converged true 
    -ksp_converged_reason -ksp_rtol 1.0e-12"
  write_dir = ARGS[1]
  GridapPETSc.with(args=split(solver_options)) do
    main(mesh_partition,distribute,write_dir)
  end
end
\end{jlcode}
This requests a (2,2)-partition of the mesh and calls PETSc with the options described in \jl{solver_options}. The options correspond to a conjugate gradient (CG) method with an algebraic multigrid preconditioner (AMG). In addition, we specify the \jl{write_dir} as a command line argument. We then adjust \jl{main} to take two arguments: \jl{mesh_partition} and \jl{distribute}. These are used to create the MPI ranks by replacing line 3 with
\begin{jlcode}
function main(mesh_partition,distribute,write_dir)
  ranks = distribute(LinearIndices((prod(mesh_partition), )))
\end{jlcode}

Next we adjust lines 26--29 as follows:
\begin{jlcode}
path = "$write_dir/therm_comp_MPI/"
i_am_main(ranks) && mkpath(path)
# Model
model = CartesianDiscreteModel(ranks,mesh_partition,dom, el_size);
\end{jlcode}
The function \jl{i_am_main} returns true only on the lead processor and is used here to ensure that the \jl{mkpath} file operations are safely run only once. In addition, we create a partitioned mesh using GridapDistributed by passing \jl{ranks} and \jl{mesh_partition} to \jl{CartesianDiscreteModel}.

To utilise the PETSc linear solver defined by \jl{solver_options} we replace line 51 and 63 with
\begin{jlcode}
Tm = SparseMatrixCSR{0,PetscScalar,PetscInt}
Tv = Vector{PetscScalar}
solver = PETScLinearSolver()
assem_U = SparseMatrixAssembler(Tm,Tv,U,V)
assem_adjoint = SparseMatrixAssembler(Tm,Tv,V,U)
assem_deriv = SparseMatrixAssembler(Tm,Tv,U_reg,U_reg)
state_map = AffineFEStateMap(
  a,l,U,V,V_φ,U_reg,φh,dΩ,dΓ_N;
  assem_U,assem_adjoin,assem_deriv,
  ls = solver,adjoint_ls = solver)
\end{jlcode}
and
\begin{jlcode}
vel_ext = VelocityExtension(
  a_hilb, U_reg, V_reg;
  assem = assem_deriv, ls = solver())
\end{jlcode}
respectively. In the above we specify that the \jl{SparseMatrixAssembler} should assemble \jl{SparseMatrixCSR} matrices with the \jl{PetscScalar} and \jl{PetscInt} datatypes. We then use the solver provided by PETSc for the state equations, adjoint problem (when required), and velocity extension problem using the \jl{PETScLinearSolver} wrapper. To ensure printing is only done on a single processor we adjust line 68 and 69 to be
\begin{jlcode}
optimiser = AugmentedLagrangian(pcfs,stencil,vel_ext,φh; γ,γ_reinit,verbose=i_am_main(ranks))
\end{jlcode}
Finally, we safely write the optimiser history by passing \jl{ranks} to \jl{write_history} in line 74:
\begin{jlcode}
write_history(path*"/history.txt",get_history(optimiser); ranks)
\end{jlcode}

The driver script \jl{therm_comp.jl} (available in \hyperref[rep results]{the source code}) can now be run with 4 MPI tasks and PETSc as:
\begin{jlcode}
mpiexecjl -n 4 julia therm_comp_MPI.jl results/
\end{jlcode}
In Figure \ref{fig:thermal2d-results-c} we show the optimisation result for the distributed problem. As expected it exactly matches the serial result up to the relative tolerance for the iterative solver.

\subsubsection{Extension: 3D}\label{sec: thermal 3d ext}

In this section we consider extending the driver from Section \ref{sec: mpi petsc ext} to handle three-dimensional problems. First, we extend the mesh and MPI partition to three dimensions by setting \jl{el_size = (150,150,150)} and \jl{mesh_partition = (4,6,6)}. This yields approximately 27500 degrees of freedom per processor. In Section \ref{sec: benchmarks}, we will assess the resulting speedup from this choice. Next, we update the indicator functions for $\Gamma_N$ and $\Gamma_D$ to match the diagram in Figure \ref{fig:thermal3d-results-a}:
\begin{jlcode}
f_Γ_N(x) = (x[1] ≈ xmax) && (ymax/2-ymax*prop_Γ_N/2 - eps() <= x[2] <= ymax/2+ymax*prop_Γ_N/2 + eps()) && (zmax/2-zmax*prop_Γ_N/2 - eps() <= x[3] <= zmax/2+zmax*prop_Γ_N/2 + eps())
f_Γ_D(x) = (x[1] ≈ 0.0) && (x[2] <= ymax*prop_Γ_D + eps() || x[2] >= ymax-ymax*prop_Γ_D - eps()) && (x[3] <= zmax*prop_Γ_D + eps() || x[3] >= zmax-zmax*prop_Γ_D - eps())
\end{jlcode}
Finally, for the user parameters, we double the number of \jl{max_steps} for solving the Hamilton-Jacobi evolution equation as we have found that this yields better convergence for three-dimensional problems.

This driver script \jl{therm_comp_3d.jl} (available in \hyperref[rep results]{the source code}) can now be run with 144 MPI tasks as:
\begin{jlcode}
mpiexecjl -n 144 julia therm_comp_MPI_3D.jl results/
\end{jlcode}
To solve this problem we utilise 144 Intel® Xeon® Platinum 8274 Processors with 4GB per core on the Gadi@NCI Australian supercomputer. This optimisation problem takes approximately 15 minutes to solve over 167 iterations. In Figure \ref{fig:thermal3d-results} we show the optimisation results for this problem. Notice that we obtain an analogous result to the two-dimensional problem, where heat is able to follow the shortest path from the heat source to the heat sinks. We note that we achieve a three-dimensional distributed implementation with only the addition of 25 lines including line breaks.

\begin{figure*}[t]
    \centering
        \begin{subfigure}{0.325\textwidth}
            \centering
            \def\svgwidth{\textwidth}
            \large
\begingroup%
  \makeatletter%
  \providecommand\color[2][]{%
    \errmessage{(Inkscape) Color is used for the text in Inkscape, but the package 'color.sty' is not loaded}%
    \renewcommand\color[2][]{}%
  }%
  \providecommand\transparent[1]{%
    \errmessage{(Inkscape) Transparency is used (non-zero) for the text in Inkscape, but the package 'transparent.sty' is not loaded}%
    \renewcommand\transparent[1]{}%
  }%
  \providecommand\rotatebox[2]{#2}%
  \newcommand*\fsize{\dimexpr\f@size pt\relax}%
  \newcommand*\lineheight[1]{\fontsize{\fsize}{#1\fsize}\selectfont}%
  \ifx\svgwidth\undefined%
    \setlength{\unitlength}{296.18659178bp}%
    \ifx\svgscale\undefined%
      \relax%
    \else%
      \setlength{\unitlength}{\unitlength * \real{\svgscale}}%
    \fi%
  \else%
    \setlength{\unitlength}{\svgwidth}%
  \fi%
  \global\let\svgwidth\undefined%
  \global\let\svgscale\undefined%
  \makeatother%
  \begin{picture}(1,0.82285591)%
    \lineheight{1}%
    \setlength\tabcolsep{0pt}%
    \put(0.87220567,0.52907139){\color[rgb]{0,0,0}\makebox(0,0)[lt]{\smash{\begin{tabular}[t]{l}$\Gamma_{N}$\end{tabular}}}}%
    \put(0.09405318,0.16136255){\color[rgb]{0,0,0}\makebox(0,0)[lt]{\smash{\begin{tabular}[t]{l}$\Gamma_{D}$\end{tabular}}}}%
    \put(0.35,0.29467442){\color[rgb]{0,0,0}\makebox(0,0)[lt]{\smash{\begin{tabular}[t]{l}$\Gamma_{D}$\end{tabular}}}}%
    \put(0,0){\includegraphics[width=\unitlength,page=1]{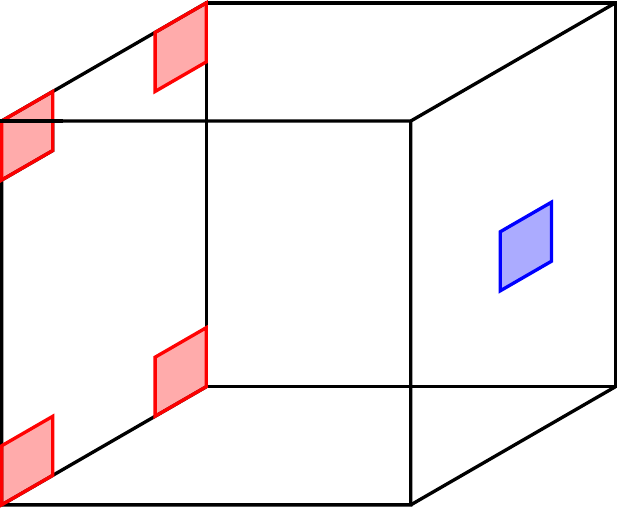}}%
    \put(0.09631437,0.55869981){\color[rgb]{0,0,0}\makebox(0,0)[lt]{\smash{\begin{tabular}[t]{l}$\Gamma_{D}$\end{tabular}}}}%
    \put(0.36401395,0.68801769){\color[rgb]{0,0,0}\makebox(0,0)[lt]{\smash{\begin{tabular}[t]{l}$\Gamma_{D}$\end{tabular}}}}%
  \end{picture}%
\endgroup%

            \caption{}
            \label{fig:thermal3d-results-a} 
    \end{subfigure}
    \begin{subfigure}{0.325\textwidth}
        \centering
        \includegraphics[width=\textwidth]{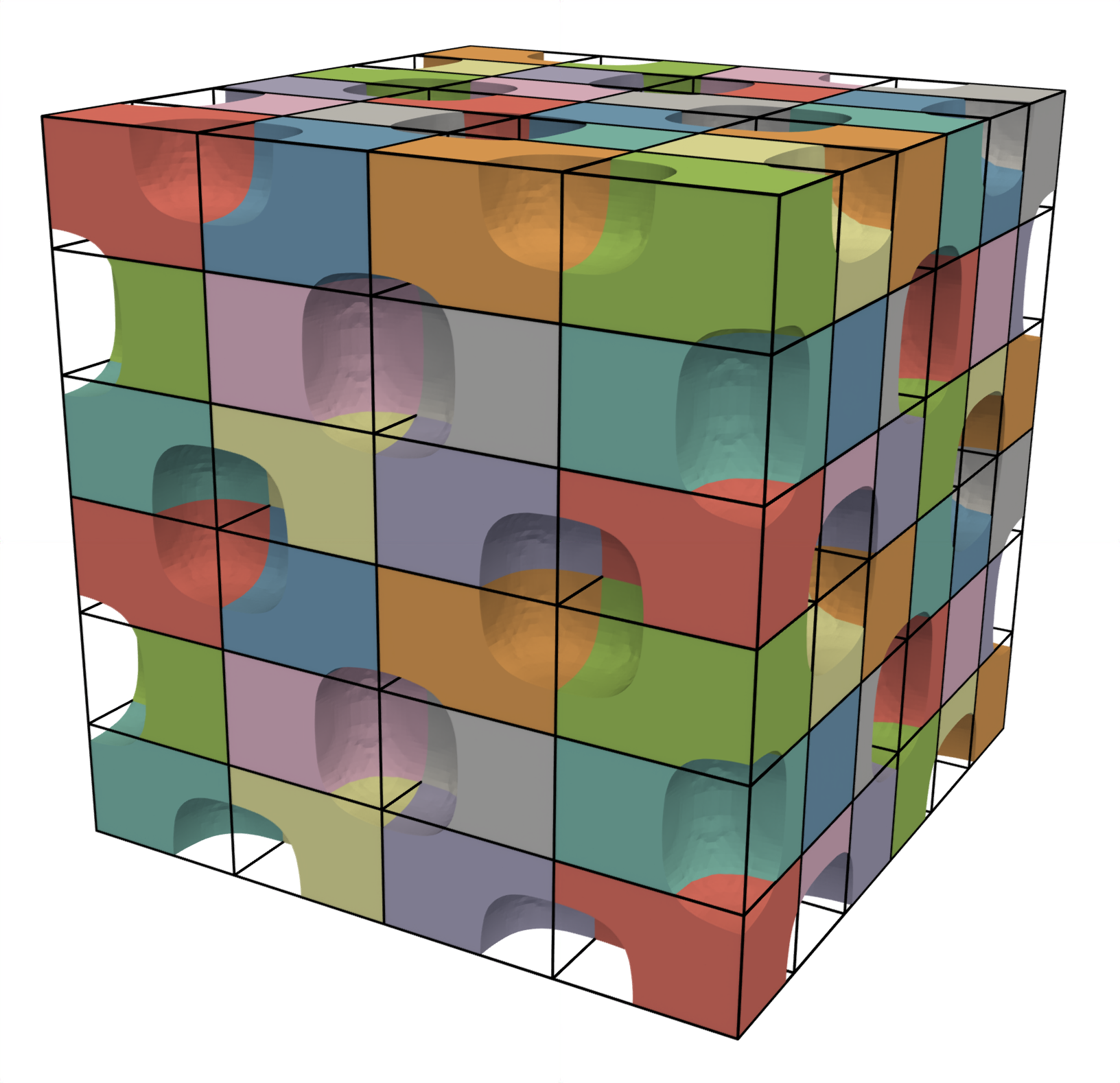}
        \caption{}
        \label{fig:thermal3d-results-b}    
    \end{subfigure}
    \begin{subfigure}{0.325\textwidth}
        \centering
        \includegraphics[width=\textwidth]{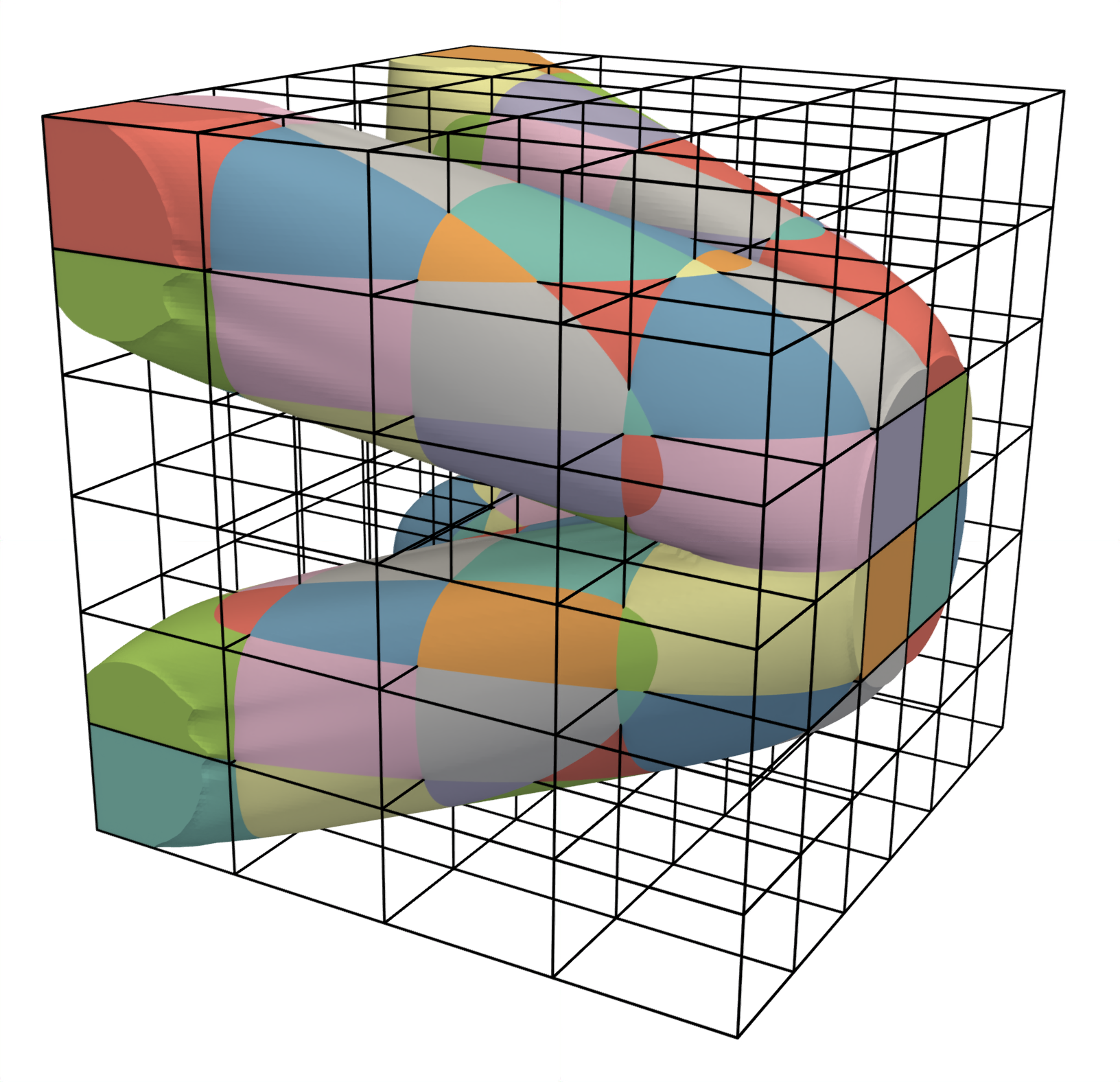}
        \caption{}
        \label{fig:thermal3d-results-c}    
    \end{subfigure}
    \caption{The three-dimensional minimum thermal compliance problem at 40\% volume fraction solved in parallel over a (4,6,6)-partition with a mesh size of $150\times150\times150$. (a) shows the computational domain. (b) and (c) visualise the initial and final structures over the partitions as an isovolume for $\phi\leq0$, respectively.}
    \label{fig:thermal3d-results}
\end{figure*}

\subsection{Minimum elastic compliance}\label{sec: elast comp}
In our next example, we consider the classical three-dimensional minimum compliance problem for a linear elastic symmetric cantilever (see \cite{10.1016/bs.hna.2020.10.004_978-0-444-64305-6_2021}for further details). Figure \ref{fig:cantilever-setup} shows the boundary conditions for this problem. 
\begin{figure}[t]
    \centering
    \def\svgwidth{0.8\columnwidth}
    \large
\begingroup%
  \makeatletter%
  \providecommand\color[2][]{%
    \errmessage{(Inkscape) Color is used for the text in Inkscape, but the package 'color.sty' is not loaded}%
    \renewcommand\color[2][]{}%
  }%
  \providecommand\transparent[1]{%
    \errmessage{(Inkscape) Transparency is used (non-zero) for the text in Inkscape, but the package 'transparent.sty' is not loaded}%
    \renewcommand\transparent[1]{}%
  }%
  \providecommand\rotatebox[2]{#2}%
  \newcommand*\fsize{\dimexpr\f@size pt\relax}%
  \newcommand*\lineheight[1]{\fontsize{\fsize}{#1\fsize}\selectfont}%
  \ifx\svgwidth\undefined%
    \setlength{\unitlength}{369.83284674bp}%
    \ifx\svgscale\undefined%
      \relax%
    \else%
      \setlength{\unitlength}{\unitlength * \real{\svgscale}}%
    \fi%
  \else%
    \setlength{\unitlength}{\svgwidth}%
  \fi%
  \global\let\svgwidth\undefined%
  \global\let\svgscale\undefined%
  \makeatother%
  \begin{picture}(1,0.65899742)%
    \lineheight{1}%
    \setlength\tabcolsep{0pt}%
    \put(0.88103493,0.4120708){\color[rgb]{0,0,0}\makebox(0,0)[lt]{\smash{\begin{tabular}[t]{l}$\Gamma_{N}$\end{tabular}}}}%
    \put(0.27865491,0.53964331){\color[rgb]{0,0,0}\makebox(0,0)[lt]{\smash{\begin{tabular}[t]{l}$\Gamma_{D}$\end{tabular}}}}%
    \put(0,0){\includegraphics[width=\unitlength,page=1]{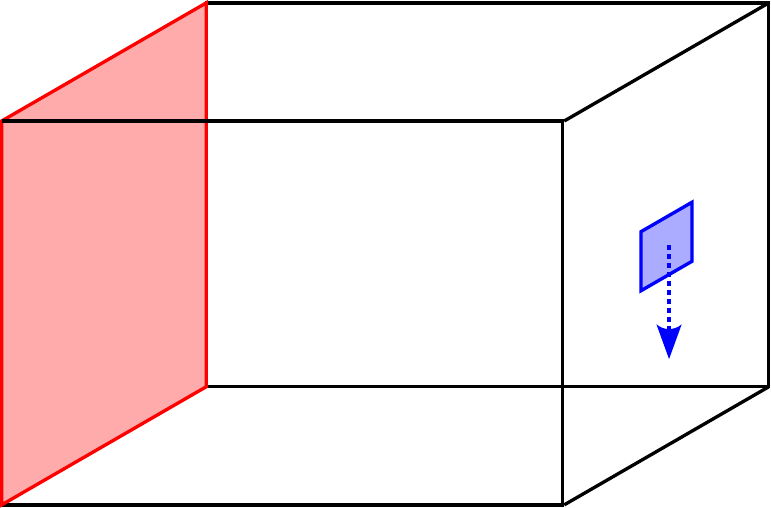}}%
  \end{picture}%
\endgroup%

    \caption{A schematic for the three-dimensional cantilever minimum elastic compliance problem. The red boundary denotes a Dirichlet boundary condition on $\Gamma_D$, while the blue arrow denotes an applied stress over $\Gamma_N$.}
    \label{fig:cantilever-setup}
\end{figure}
The corresponding weak formulation of the linear elastic symmetric cantilever is: \textit{Find} $\boldsymbol{u}\in U\coloneqq [H^1_{\Gamma_D}(\Omega)]^3$ \textit{such that}
\begin{equation}\label{eqn: elast weak form}
\int_{\Omega}\boldsymbol{C\varepsilon(u)}:\boldsymbol{\varepsilon(v)}~\mathrm{d}\boldsymbol{x} = \int_{\Gamma_N}\boldsymbol{g}\cdot\boldsymbol{v}~\mathrm{d}s,~\forall \boldsymbol{v}\in U,
\end{equation}
\textit{where} $\boldsymbol{C}$ \textit{is the stiffness tensor,} $\boldsymbol{\varepsilon}$ \textit{is the strain operator, and the operation} $:$ \textit{denotes double contraction}. The optimisation problem is then
\begin{equation}
    \begin{aligned}
    \min_{\Omega\in{D}}&~J(\Omega)\coloneqq\int_{\Omega}\boldsymbol{C\varepsilon(u)}:\boldsymbol{\varepsilon(u)}~\mathrm{d}\boldsymbol{x}\\
    \text{s.t. }&~C(\Omega)=0,\\
    &~a(\boldsymbol{u},\boldsymbol{v})=l(\boldsymbol{v}),~\forall \boldsymbol{v}\in U,
    \end{aligned}
\end{equation}
where we take $D=[0,2]\times[0,1]^2$, $C(\Omega)$ is a volume constraint as previously, and the final line denotes the weak formulation in Eq. \ref{eqn: elast weak form}. Note that the shape derivative of $J(\Omega)$ is given by \cite{10.1016/j.jcp.2003.09.032_2004} as
\begin{equation}\label{eqn: shape derive elast comp}
J'(\Omega)(\boldsymbol{\theta}) = -\int_{\Gamma}\boldsymbol{C\varepsilon(u)}:\boldsymbol{\varepsilon(u)}~\boldsymbol{\theta}\cdot\boldsymbol{n}~\mathrm{d}s,
\end{equation}
where we have again assumed that $\boldsymbol{\theta}\cdot\boldsymbol{n}=0$ on $\Gamma_N$ and $\Gamma_D$.

Below we highlight the important changes to the driver script from Section \ref{sec: thermal 3d ext} to solve the above optimisation problem. First, we adjust the bounding domain to have end points \jl{xmax,ymax,zmax=(2.0,1.0,1.0)}. Next, we generate the stiffness tensor $\boldsymbol{C}$ and create the applied load via \jl{C = isotropic_elast_tensor(3,1.0,0.3)} and \jl{g = VectorValue(0,0,-1)} where the first argument of \jl{isotropic_elast_tensor} is the dimension of the problem, and the latter arguments are the Young's modulus and Poisson's ratio of the base material, respectively. To update the boundary $\Gamma_D$ to match Figure \ref{fig:cantilever-setup} we adjust the indicator function to be \jl{f_Γ_D(x) = (x[1] ≈ 0.0)}. Next, we adjust the reference element and function spaces to handle vector-valued solutions:
\begin{jlcode}
reffe = ReferenceFE(lagrangian,VectorValue{3,Float64}, order)
reffe_scalar = ReferenceFE(lagrangian,Float64,order)
V = TestFESpace(model,reffe;dirichlet_tags=["Gamma_D"])
U = TrialFESpace(V,VectorValue(0.0,0.0,0.0))
V_φ = TestFESpace(model,reffe_scalar)
V_reg = TestFESpace(model,reffe_scalar;dirichlet_tags= ["Gamma_N"])
U_reg = TrialFESpace(V_reg,0)
\end{jlcode}
The weak form and optimisation functional $J$ are then written as
\begin{jlcode}
a(u,v,φ,dΩ,dΓ_N) = ∫((I ∘ φ)*(C ⊙ ε(u) ⊙ ε(v)))dΩ
l(v,φ,dΩ,dΓ_N) = ∫(v⋅g)dΓ_N
\end{jlcode}
and
\begin{jlcode}
J(u,φ,dΩ,dΓ_N) = ∫((I ∘ φ)*(C ⊙ ε(u) ⊙ ε(u)))dΩ
dJ(q,u,φ,dΩ,dΓ_N) = ∫((C ⊙ ε(u) ⊙ ε(u))*q*(DH ∘ φ)*(norm ∘ ∇(φ)))dΩ
\end{jlcode}
respectively, where the operator \jl{⊙} denotes double contraction of tensors in Gridap. Finally, to solve the state equations we use a custom GridapPETSc solver \jl{ElasticitySolver} that implements CG-AMG for elastic-type problems, and for the Hilbertian extension-regularisation problem we use a standard CG-AMG solver via \jl{PETScLinearSolver} as previously. 

We now solve this problem over a mesh of size $160\times80\times80$ via a (4,6,6)-partition using 144 MPI tasks as:
\begin{jlcode}
mpiexecjl -n 144 julia elast_comp_MPI_3D.jl results/
\end{jlcode}
This driver script is available in the \hyperref[rep results]{the source code}. In Figure \ref{fig:cantilever-results-b} we show the results for this optimisation problem given the initial structure in Figure \ref{fig:cantilever-results-a}. This optimisation problem takes approximately 38 minutes to solve over 108 iterations. 
\begin{figure*}[t]
    \centering
    \begin{subfigure}{0.325\textwidth}
        \centering
        \includegraphics[width=\textwidth]{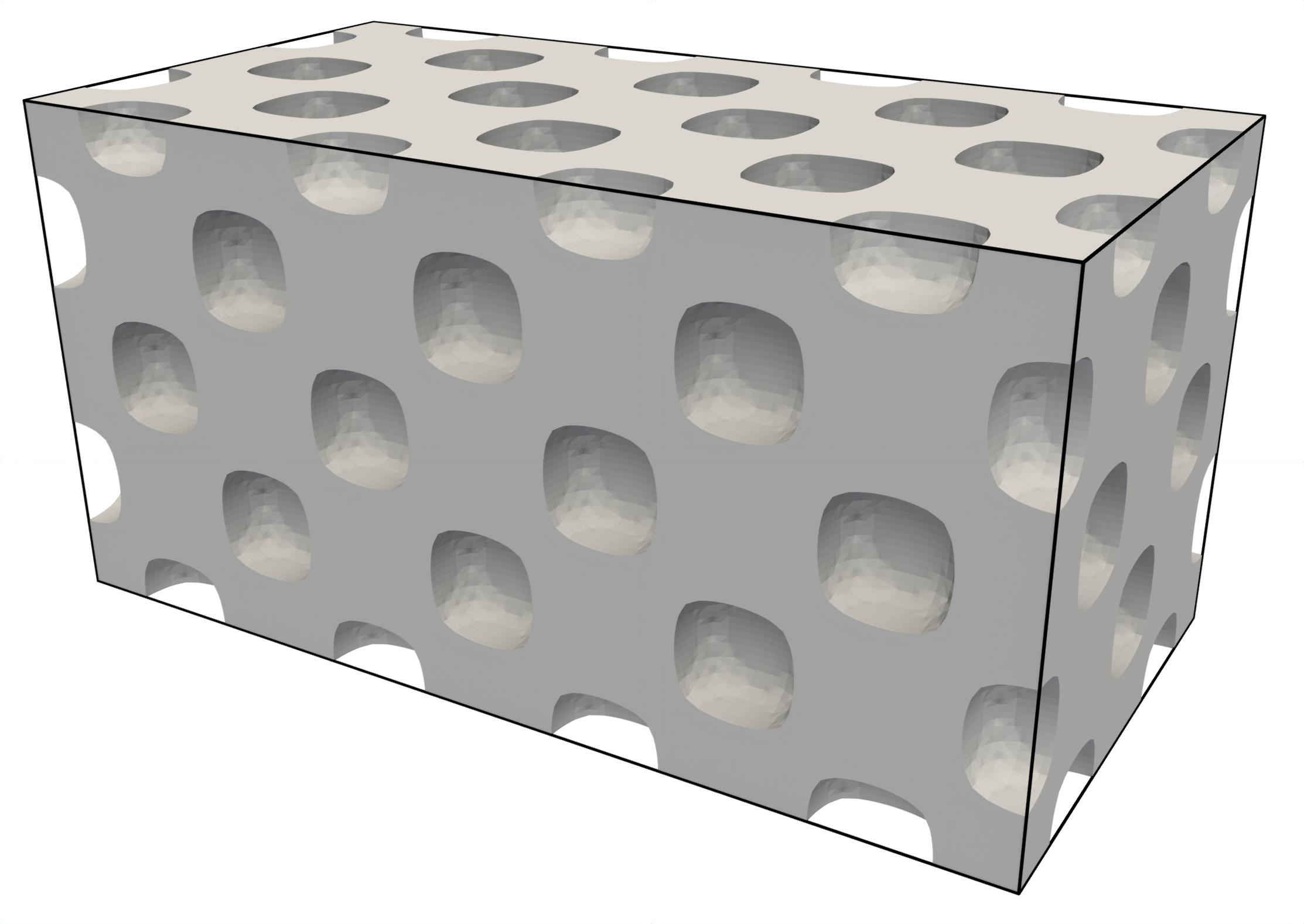}
        \caption{}
        \label{fig:cantilever-results-a}    
    \end{subfigure}
    \begin{subfigure}{0.325\textwidth}
        \centering
        \includegraphics[width=\textwidth]{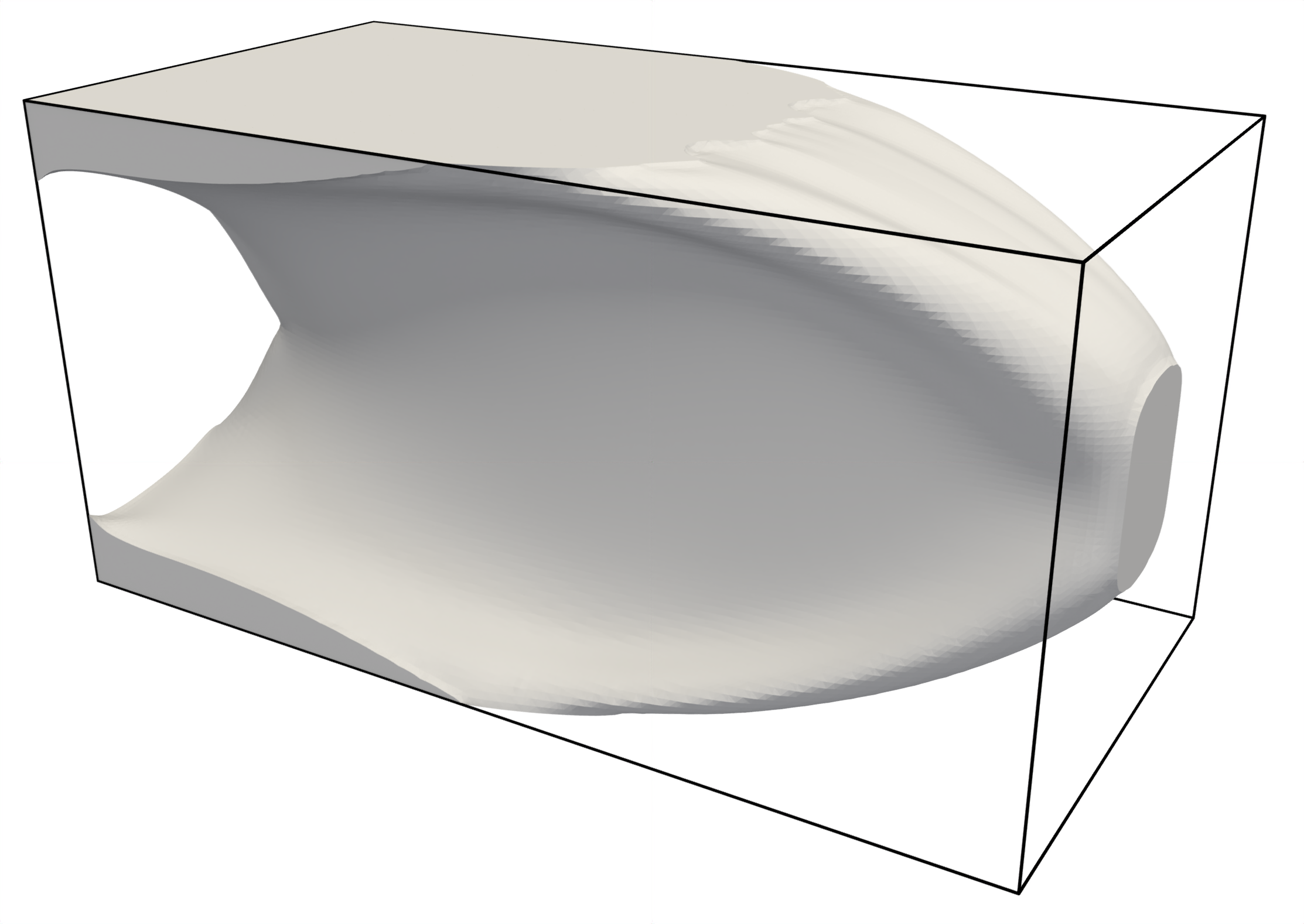}
        \caption{}
        \label{fig:cantilever-results-b}    
    \end{subfigure}
    \begin{subfigure}{0.325\textwidth}
        \centering
        \includegraphics[width=\textwidth]{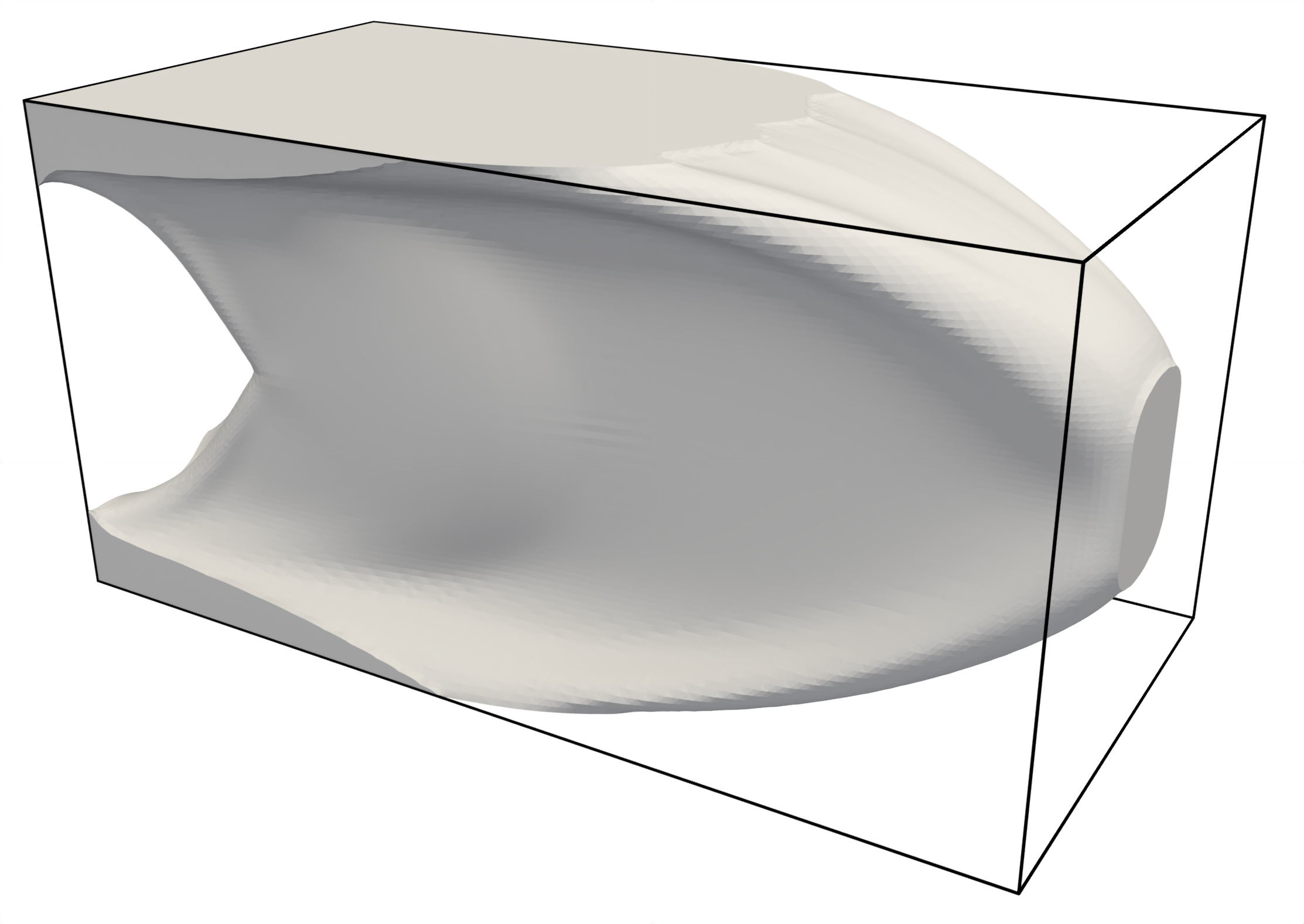}
        \caption{}
        \label{fig:cantilever-results-c}    
    \end{subfigure}
    \caption{Results for the three-dimensional cantilever minimum elastic/hyperelastic compliance problem at 40\% volume fraction. We visualise the result using an isovolume for $\phi\leq0$. (a) shows the initial structure while (b) and (c) show the linear elastic and hyperelastic optimisation result, respectively.}
    \label{fig:cantilever-results}
\end{figure*}

\subsubsection{Extension: hyperelasticity}\label{sec: hyperelast comp}
GridapTopOpt has the capability to solve problems with non-linear weak formulations. In this extension we consider the previous symmetric cantilever problem with a hyperelastic consitutitive law for a Neo-Hookean solid \citep{Bonet_Wood_2008}. The corresponding weak formulation of this problem is: \textit{Find} $\boldsymbol{u}\in U\coloneqq [H^1_{\Gamma_D}(\Omega)]^3$ \textit{such that}
$\mathcal{R}(\boldsymbol{u},\boldsymbol{v})=0$, $\forall \boldsymbol{v}\in U$\textit{, where}
\begin{equation}
\mathcal{R}(\boldsymbol{u},\boldsymbol{v})=\int_{\Omega}\boldsymbol{S}(\boldsymbol{u}):\mathrm{d}{\boldsymbol{E}}(\boldsymbol{u},\boldsymbol{v})~\mathrm{d}\boldsymbol{x} - \int_{\Gamma_N}\boldsymbol{g}\cdot\boldsymbol{v}~\mathrm{d}s,
\end{equation}
$\boldsymbol{S}$ \textit{is the second Piola-Kirchhoff tensor}
\begin{equation}
    \boldsymbol{S}(\boldsymbol{u}) = \mu(\boldsymbol{I}-\boldsymbol{C}^{-1})+\lambda\ln(J)\boldsymbol{C}^{-1},
\end{equation}
\textit{and} $\mathrm{d}{\boldsymbol{E}}$ \textit{is a variation of the Green strain tensor} $\boldsymbol{E}=\frac{1}{2}\left(\boldsymbol{F}^\intercal\boldsymbol{F}-\boldsymbol{I}\right)$ \textit{given by}
\begin{equation}
    \mathrm{d}{\boldsymbol{E}}(\boldsymbol{u},\boldsymbol{v}) = \frac{1}{2}\left(\boldsymbol{F}\cdot\boldsymbol{\nabla}(\boldsymbol{v})+(\boldsymbol{\nabla}(\boldsymbol{v})\cdot\boldsymbol{F})^\intercal\right).
\end{equation}
In the above, $\mu$ and $\lambda$ are the Lam\'e parameters, $\boldsymbol{C}=\boldsymbol{F}^\intercal\boldsymbol{F}$ is the right Cauchy-Green deformation tensor, $\boldsymbol{F}=\boldsymbol{I}+\boldsymbol{\nabla u}$ is the deformation gradient tensor, and $J=\sqrt{\det\boldsymbol{C}}$ captures the volume change. Further mathematical details for this problem are described by \cite{Bonet_Wood_2008}. Our optimisation problem is then 
\begin{equation}
    \begin{aligned}
    \min_{\Omega\in{D}}&~J(\Omega)\coloneqq\int_{\Omega}\boldsymbol{S}(\boldsymbol{u}):\mathrm{d}{\boldsymbol{E}}(\boldsymbol{u},\boldsymbol{u})~\mathrm{d}\boldsymbol{x}\\
    \text{s.t. }&~C(\Omega)=0,\\
    &~\mathcal{R}(\boldsymbol{u},\boldsymbol{v})=0,~\forall \boldsymbol{v}\in U,
    \end{aligned}
\end{equation}
where $C(\Omega)$ and $D$ are as previously for the linear problem.

Owing to near one-to-one correspondence between mathematical notation and syntax in Gridap and GridapTopOpt, extending the previous script to handle non-linearity of the weak form is straightforward. The most important modification is switching from an \jl{AffineFEStateMap} to a \jl{NonlinearFEStateMap} to handle the nonlinear state equations and enable automatic differentiation via the adjoint method. This is implemented via:
\begin{jlcode}
lin_solver = ElasticitySolver(V)
nl_solver = NewtonSolver(lin_solver;maxiter=50,rtol=10^-8, verbose=i_am_main(ranks))
state_map = NonlinearFEStateMap(
  res,U,V,V_φ,U_reg,φh,dΩ,dΓ_N;
  assem_U = SparseMatrixAssembler(Tm,Tv,U,V),
  assem_adjoint = SparseMatrixAssembler(Tm,Tv,V,U),
  assem_deriv = SparseMatrixAssembler(Tm,Tv,U_reg,U_reg),
  nls = nl_solver, adjoint_ls = lin_solver)
\end{jlcode}
We use a Newton–Raphson method to solve the nonlinear weak formulation and utilise the previous \jl{ElasticitySolver} for the linear problem inside Newton-Raphson and for the adjoint problem. We then utilise automatic differentiation for the objective functional by omitting the optional \jl{analytic_dJ} from \jl{PDEConstrainedFunctionals}. We reuse memory between non-linear iterations of the Newton-Raphson method. As a result, the memory cost of the nonlinear problem with automatic differentiation is comparable to that of a linear problem with automatic differentiation. It should be noted that we do not focus on the development of scalable solvers for the resolution of this nonlinear problem as this is still an open area of research. In addition, this implementation is suitable for small to moderate strains only owing to the ersatz material approximation. In addition, the exploration of efficient nonlinear solvers is outside the scope of this paper. Both of these are on-going areas of research in topology optimisation \citep[e.g.,][]{Chen_Wang_Wang_Zhang_2017,Ortigosa_Martínez-Frutos_Gil_Herrero-Pérez_2019}. In future, we hope to explore solutions to these restrictions using provably-robust solvers and unfitted finite element methods.

We now solve this problem over a mesh of size $160\times80\times80$ via a (4,6,6)-partition using 144 MPI tasks as:
\begin{jlcode}
mpiexecjl -n 144 julia hyperelast_comp_MPI_3D.jl results/
\end{jlcode}
This driver script is available in the \hyperref[rep results]{the source code}. 
In Figure \ref{fig:cantilever-results-c} we show the result for this optimisation problem. As expected, one may observe asymmetry of the structure compared to the linear result. This optimisation problem takes approximately 8 hours to solve over 139 iterations. 

\subsection{Inverter mechanism}

In this example we utilise the automatic differentiation capabilities of GridapTopOpt to solve a three-dimensional inverter mechanism problem in parallel. Figure \ref{fig:cantilever-setup} shows the boundary conditions for this problem, that are similar to \cite{10.1007/s00158-018-1950-2_2018}.
\begin{figure}[t]
    \centering
    \def\svgwidth{0.8\columnwidth}
    \large
\begingroup%
  \makeatletter%
  \providecommand\color[2][]{%
    \errmessage{(Inkscape) Color is used for the text in Inkscape, but the package 'color.sty' is not loaded}%
    \renewcommand\color[2][]{}%
  }%
  \providecommand\transparent[1]{%
    \errmessage{(Inkscape) Transparency is used (non-zero) for the text in Inkscape, but the package 'transparent.sty' is not loaded}%
    \renewcommand\transparent[1]{}%
  }%
  \providecommand\rotatebox[2]{#2}%
  \newcommand*\fsize{\dimexpr\f@size pt\relax}%
  \newcommand*\lineheight[1]{\fontsize{\fsize}{#1\fsize}\selectfont}%
  \ifx\svgwidth\undefined%
    \setlength{\unitlength}{315.65957658bp}%
    \ifx\svgscale\undefined%
      \relax%
    \else%
      \setlength{\unitlength}{\unitlength * \real{\svgscale}}%
    \fi%
  \else%
    \setlength{\unitlength}{\svgwidth}%
  \fi%
  \global\let\svgwidth\undefined%
  \global\let\svgscale\undefined%
  \makeatother%
  \begin{picture}(1,0.7962387)%
    \lineheight{1}%
    \setlength\tabcolsep{0pt}%
    \put(0.21136689,0.50259726){\color[rgb]{0,0,0}\makebox(0,0)[lt]{\smash{\begin{tabular}[t]{l}$\Gamma_{\text{in}}$\end{tabular}}}}%
    \put(0.8,0.4981609){\color[rgb]{0,0,0}\makebox(0,0)[lt]{\smash{\begin{tabular}[t]{l}$\Gamma_{\text{out}}$\end{tabular}}}}%
    \put(0.33122942,0.66){\color[rgb]{0,0,0}\makebox(0,0)[lt]{\smash{\begin{tabular}[t]{l}$\Gamma_{D}$\end{tabular}}}}%
    \put(0.08637104,0.17914774){\color[rgb]{0,0,0}\makebox(0,0)[lt]{\smash{\begin{tabular}[t]{l}$\Gamma_{D}$\end{tabular}}}}%
    \put(0.32773104,0.30898758){\color[rgb]{0,0,0}\makebox(0,0)[lt]{\smash{\begin{tabular}[t]{l}$\Gamma_{D}$\end{tabular}}}}%
    \put(0.10262779,0.54){\color[rgb]{0,0,0}\makebox(0,0)[lt]{\smash{\begin{tabular}[t]{l}$\Gamma_{D}$\end{tabular}}}}%
    \put(0,0){\includegraphics[width=\unitlength,page=1]{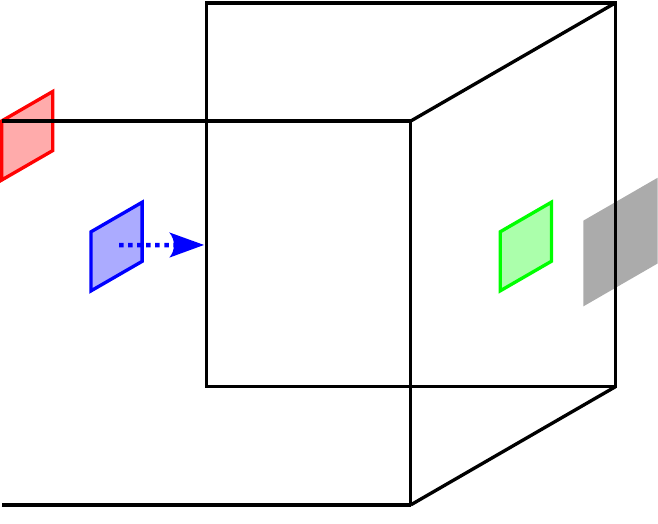}}%
    \put(0,0){\includegraphics[width=\unitlength,page=2]{_Figure7.pdf}}%
  \end{picture}%
\endgroup%

    \caption{A schematic for the three-dimensional inverter mechanism problem. The red boundary denotes a Dirichlet boundary condition on $\Gamma_D$, the blue arrow denotes an applied stress over $\Gamma_{\text{in}}$, and the green boundary denotes a robin boundary condition on $\Gamma_{\text{out}}$}
    \label{fig:inverter-setup}
\end{figure}
The corresponding weak formulation for the linear elastic problem is: {\textit{Find} $\boldsymbol{u}\in U\coloneqq [H^1_{\Gamma_D}(\Omega)]^3$ \textit{such that}
\begin{align}\label{eqn: inverter form}
\int_{\Omega}\boldsymbol{C\varepsilon(u)}&:\boldsymbol{\varepsilon(v)}~\mathrm{d}\boldsymbol{x} + \int_{\Gamma_{\text{out}}}k_s\boldsymbol{u}\cdot\boldsymbol{v}~\mathrm{d}\Gamma\nonumber\\&= \int_{\Gamma_N}\boldsymbol{g}\cdot\boldsymbol{v}~\mathrm{d}s,~\forall \boldsymbol{v}\in U,
\end{align}
\textit{where} $k_s$ \textit{is an artificial spring stiffness}.} The optimisation problem that we consider here is minimising the average positive $x$ displacement on $\Gamma_{\text{in}}$, while constraining the volume and the average negative $x$ displacement on $\Gamma_{\text{out}}$. Mathematically, this problem is given by:
\begin{equation}
    \begin{aligned}
    \min_{\Omega\in{D}}&~J(\Omega)\coloneqq\frac{1}{\operatorname{Vol}(\Gamma_{\text{in}})}\int_{\Gamma_{in}}\boldsymbol{u}\cdot\boldsymbol{e}_1~\mathrm{d}\Gamma\\
    \text{s.t. }&~C_1(\Omega)=0,\\
    &~C_2(\Omega)=0,\\
    &~a(\boldsymbol{u},\boldsymbol{v})=l(\boldsymbol{v}),~\forall \boldsymbol{v}\in U,
    \end{aligned}
\end{equation}
where $\boldsymbol{e}_{1}=(1,0,0)$, $C_1$ is the volume constraint as previously, and $C_2$ is given by 
\begin{equation}
    C_2(\Omega)=\frac{1}{\operatorname{Vol}(\Gamma_{\text{out}})}\int_{\Gamma_{out}}(\boldsymbol{u}\cdot\boldsymbol{e}_1-\delta_x)\mathrm{d}\Gamma,
\end{equation}
where $\delta_x<0$ is the required average negative displacement.

The modification of the linear elastic cantilever script to handle this problem is straightforward and the driver script is provided as \jl{inverter_3d.jl} in the \hyperref[rep results]{the source code}. It is worth noting that we slightly modify the initial level set function to avoid small islands of material on the face containing $\Gamma_{\text{out}}$. The initial structure is shown in Figure \ref{fig:inverter-results-a}. In addition, we require that a small region near $\Gamma_{\text{out}}$ has zero extended shape sensitivity to ensure that it remains solid over the course of the optimisation history.

We solve this problem over a mesh of size $100\times100\times100$ via a (4,6,6)-partition using 144 MPI tasks as:
\begin{jlcode}
mpiexecjl -n 144 julia inverter_MPI_3D.jl results/
\end{jlcode}
In Figure \ref{fig:inverter-results} we show the optimisation result for this problem and visualise the deformation under an applied load. With no parameter tuning, this optimisation problem takes approximately 5.5 hours to solve over 523 iterations. With further parameter tuning this could be reduced. 

\begin{figure*}[t]
    \centering
    \begin{subfigure}{0.325\textwidth}
        \centering
        \includegraphics[width=\textwidth]{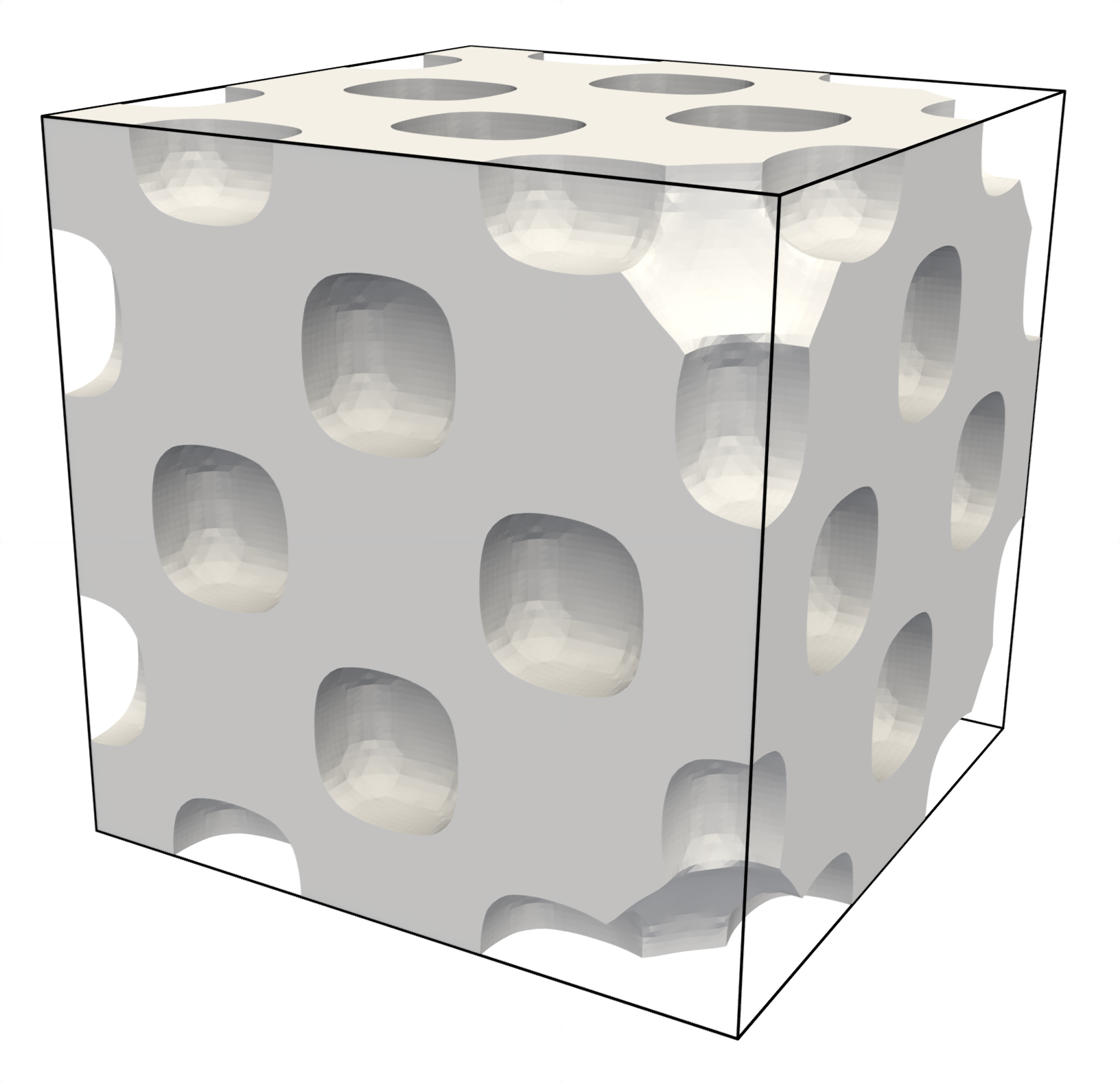}
        \caption{}
        \label{fig:inverter-results-a}    
    \end{subfigure}
    \begin{subfigure}{0.325\textwidth}
        \centering
        \includegraphics[width=\textwidth]{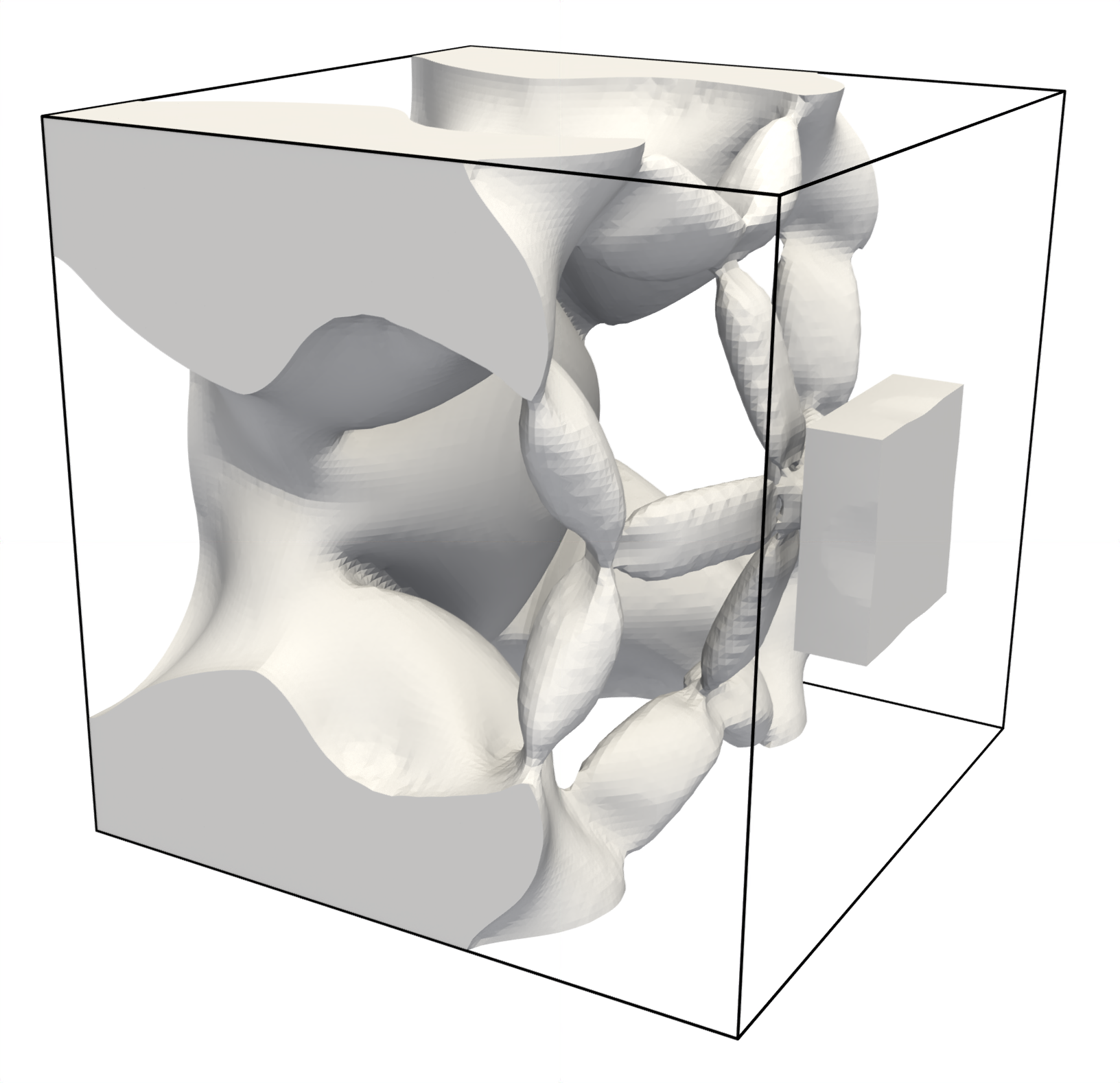}
        \caption{}
        \label{fig:inverter-results-b}    
    \end{subfigure}
    \begin{subfigure}{0.325\textwidth}
        \centering
        \includegraphics[width=\textwidth]{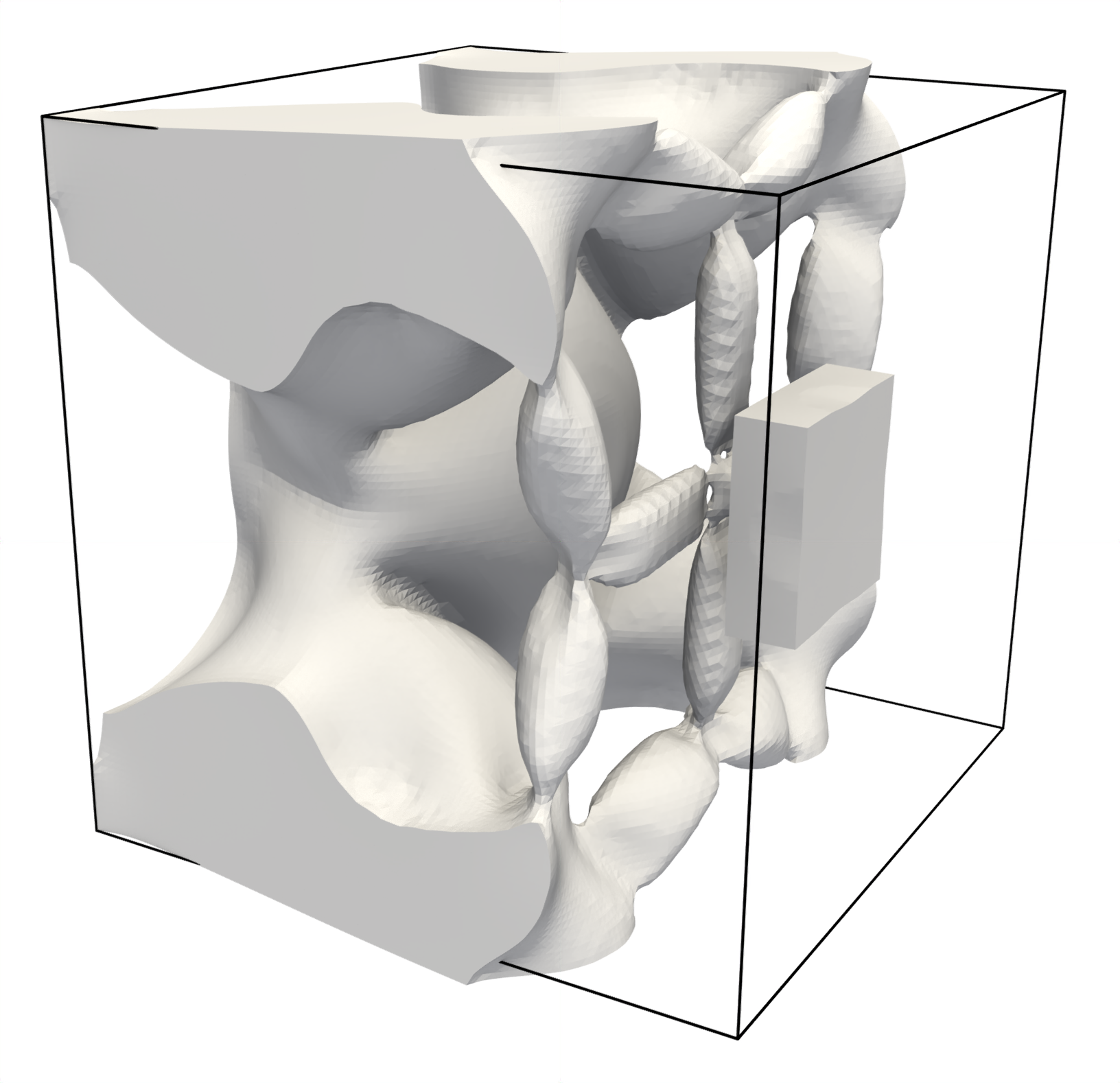}
        \caption{}
        \label{fig:inverter-results-c}    
    \end{subfigure}
    \caption{Results for the three-dimensional inverter mechanism problem at 40\% volume fraction. As previously, we visualise the result using an isovolume for $\phi\leq0$. (a) shows the initial structure, (b) shows the result at convergence, and (c) shows the resulting deformation under load at 25\% scaling.}
    \label{fig:inverter-results}
\end{figure*}

It should be noted that thickness constraints can be considered for this type of problem to avoid material with very thin joints. In the context of level-set methods, thickness constraints can be constructed using properties of the signed distance function \citep{10.1007/s00158-016-1453-y_2016}. As a result, the shape derivatives of these constraints require special treatment \citep{10.1007/s00158-016-1453-y_2016,10.1051/m2an/2019056_2020}. We plan to investigate these types of functionals in the future.

\subsection{Linear elastic inverse homogenisation}

Our final example is a three-dimensional linear elastic inverse homogenisation problem, whereby we seek to find a periodic microstructure that optimises a homogenised macroscopic material property. The corresponding weak formulation of the state equation is: \textit{For each unique constant macroscopic strain field} $\boldsymbol{\bar{\varepsilon}}^{(k l)}$\textit{, find} ${\boldsymbol{u}}^{(kl)}\in H^1_{\textrm{per}}(\Omega)^3$ \textit{such that}
\begin{equation}\label{eqn: le weak form}
    \begin{aligned}
    &\int_{\Omega} \boldsymbol{C} \boldsymbol{\varepsilon}({\boldsymbol{u}}^{(k l)}):\boldsymbol{\varepsilon}(\boldsymbol{v}) ~\mathrm{d} \Omega\\& \hspace{0.5cm} = -\int_{\Omega} \boldsymbol{C}\boldsymbol{\bar{\varepsilon}}^{(kl)}:\boldsymbol{\varepsilon}(\boldsymbol{v}) ~\mathrm{d} \Omega,~\forall\, \boldsymbol{v}\in H^1_{\textrm{per}}(\Omega)^d
    \end{aligned}
\end{equation}
\textit{where} $\bar{\varepsilon}_{i j}^{(k l)}=\frac{1}{2}\left(\delta_{i k} \delta_{j l}+\delta_{i l} \delta_{j k}\right)$ \textit{in index notation and} $H^1_{\text{per}}(D)$ \textit{is the periodic Sobolev space}. There are six unique macroscopic strain fields $\boldsymbol{\bar{\varepsilon}}^{(k l)}$ in three dimensions, and there are six corresponding displacements ${\boldsymbol{u}}^{(kl)}$ at the solution to the above weak formulation. This problem is therefore an example of a multi-field finite element problem, where each bilinear form is identical and repeated for six different linear forms.

We consider maximising the homogenised bulk modulus $\bar{\kappa}$ or resistance to deformation of the macroscopic material, subject to a volume constraint. This optimisation problem is written as
\begin{equation}
    \begin{aligned}
    \min_{\Omega\in{D}}&~-\bar{\kappa}(\Omega)\\
    \text{s.t. }&~C(\Omega)=0,\\
    &~a({\boldsymbol{u}}_p,\boldsymbol{v})=l_p(\boldsymbol{v}),~\forall \boldsymbol{v}\in U, i = 1,\dots,6,
    \end{aligned}
\end{equation}
where we take $D=[0,1]^3$, we use Voigt notation to denote the six unique displacements as ${\boldsymbol{u}}_p \equiv {\boldsymbol{u}}^{(kl)}$ corresponding to each linear form $l_p$, $C(\Omega)$ is the volume constraint as previously, and $\bar{\kappa}$ is the homogenised bulk modulus given by
\begin{align*}
    \bar{\kappa} = \frac{1}{9}(&\bar{C}_{1111}+\bar{C}_{2222}+\bar{C}_{3333}\\&+2(\bar{C}_{1122}+\bar{C}_{1133}+\bar{C}_{2233}))
\end{align*}
where $\bar{C}_{ijkl}$ is the effective stiffness tensor in index notation. This effective stiffness tensor is given by \citep{YvonnetCompHomogenization}
\begin{equation}
    \bar{C}_{ijkl}(\Omega) = \int_\Omega \boldsymbol{C}(\boldsymbol{\varepsilon}({\boldsymbol{u}}^{(ij)})+\boldsymbol{\bar\varepsilon}^{(ij)})\boldsymbol{\bar\varepsilon}^{(kl)}~\mathrm{d}\Omega,
    \label{eqn: le hom}
\end{equation}
with shape derivative \citep{10.1007/s00158-023-03663-0_2023}
\begin{equation}
\begin{aligned}
    &\bar{C}_{ijkl}^{\prime}(\Omega)(\boldsymbol{\theta})= \int_{\partial\Omega} \boldsymbol{C}(\boldsymbol{\varepsilon}({\boldsymbol{u}}^{(ij)})+\boldsymbol{\bar\varepsilon}^{(ij)})\\&\hspace{2cm}\times(\boldsymbol{\varepsilon}({\boldsymbol{u}}^{(kl)})+\boldsymbol{\bar\varepsilon}^{(kl)})~\boldsymbol{\theta}\cdot\boldsymbol{n}~\mathrm{d}{\Gamma}.
\end{aligned}
\end{equation}

The driver script to solve this problem is provided as \jl{inverse_hom_3d.jl} in the \hyperref[rep results]{the source code}. Below we discuss several important adjustments that have been made in this script. First, we demonstrate how to initialise a different level set function by setting \jl{lsf_func} equal to the zero-level set of a Schwarz P minimal surface. Next, we utilise a periodic Cartesian mesh via the optional argument \jl{isperiodic=(true,true,true)} in \jl{CartesianDiscreteModel}. We then use \jl{f_Γ_D(x) = iszero(x)} to identify a single point at the origin that we will use to sufficiently constrain the problem under periodicity. For the bilinear and linear forms we take
\begin{jlcode}
εᴹ = (TensorValue(1.,0.,0.,0.,0.,0.,0.,0.,0.),
      TensorValue(0.,0.,0.,0.,1.,0.,0.,0.,0.),    
      TensorValue(0.,0.,0.,0.,0.,0.,0.,0.,1.),    
      TensorValue(0.,0.,0.,0.,0.,1/2,0.,1/2,0.),  
      TensorValue(0.,0.,1/2,0.,0.,0.,1/2,0.,0.),  
      TensorValue(0.,1/2,0.,1/2,0.,0.,0.,0.,0.))  
a(u,v,φ,dΩ) = ∫((I ∘ φ) * C ⊙ ε(u) ⊙ ε(v))dΩ
l = [(v,φ,dΩ) -> ∫(-(I ∘ φ) * C ⊙ εᴹ[i] ⊙ ε(v))dΩ for i in 1:6]
\end{jlcode}
where the entries in \jl{εᴹ} correspond to each unique macroscopic strain $\boldsymbol{\bar{\varepsilon}}^{(k l)}$. Note that the way we have written the above is slightly atypical when compared to usual multi-field problems\footnote{See example \url{https://gridap.github.io/Tutorials/stable/pages/t007_darcy/}.}. This description allows us to leverage the fact that we have a repeated bilinear form for each corresponding linear form. In particular, because each numerical \textit{block} corresponding to each bilinear form is identical, we avoid allocating the entire stiffness matrix and instead utilise a single block. We do this in a way that also facilitates automatic differentiation. To instantiate this object we replace the \jl{AffineFEStateMap} with a \jl{RepeatedAffineFEStateMap} as
\begin{jlcode}
state_map = RepeatingAffineFEStateMap(
  6,a,l,U,V,V_φ,U_reg,φh,dΩ; ...)
\end{jlcode}
where the first argument denotes the number of repeated blocks down the diagonal in the multi-field problem and \jl{...} denotes the previous assemblers and solvers for elasticity. The first argument should match the number of linear forms from above. In principle this implementation can readily handle multi-physics inverse homogenisation problems such as thermoelastic inverse homogenisation.

The final modification consists of writing the objective:
\begin{jlcode}
Cᴴ(r,s,u,φ,dΩ) = ∫((I ∘ φ)*(C ⊙ (ε(u[r])+εᴹ[r]) ⊙ εᴹ[s]))dΩ
dCᴴ(r,s,q,u,φ,dΩ) = ∫(-q*(C ⊙ (ε(u[r])+εᴹ[r]) ⊙ (ε(u[s])+εᴹ[s]))*(DH ∘ φ)*(norm ∘ ∇(φ)))dΩ
κ(u,φ,dΩ) = -1/9*(Cᴴ(1,1,u,φ,dΩ) + Cᴴ(2,2,u,φ,dΩ) + Cᴴ(3,3,u,φ,dΩ)+2*(Cᴴ(1,2,u,φ,dΩ)+Cᴴ(1,3,u,φ,dΩ) + Cᴴ(2,3,u,φ,dΩ)))
dκ(q,u,φ,dΩ) = -1/9*(dCᴴ(1,1,q,u,φ,dΩ) + dCᴴ(2,2,q,u,φ,dΩ) + dCᴴ(3,3,q,u,φ,dΩ)+2*(dCᴴ(1,2,q,u,φ,dΩ) + dCᴴ(1,3,q,u,φ,dΩ) + dCᴴ(2,3,q,u,φ,dΩ)))
\end{jlcode}
Note that in the above we use \jl{⋅ᴴ} instead of $\bar{\cdot}$ to denote homogenised quantities. In addition, the indexing over \jl{u} retrieves the corresponding single-field solution from the multi-field finite element solution. 

We solve this problem over a mesh of size $100\times100\times100$ via a (4,6,6)-partition using 144 MPI tasks as:
\begin{jlcode}
mpiexecjl -n 144 julia inverse_hom_MPI_3D.jl results/
\end{jlcode}
In Figure \ref{fig:inverse-hom-results} we show the optimisation results for this problem. This optimisation problem takes approximately 4 hours to solve over 356 iterations.

\begin{figure*}[t]
    \centering
    \begin{subfigure}{0.325\textwidth}
        \centering
        \def\svgwidth{\textwidth}
        \large
\begingroup%
  \makeatletter%
  \providecommand\color[2][]{%
    \errmessage{(Inkscape) Color is used for the text in Inkscape, but the package 'color.sty' is not loaded}%
    \renewcommand\color[2][]{}%
  }%
  \providecommand\transparent[1]{%
    \errmessage{(Inkscape) Transparency is used (non-zero) for the text in Inkscape, but the package 'transparent.sty' is not loaded}%
    \renewcommand\transparent[1]{}%
  }%
  \providecommand\rotatebox[2]{#2}%
  \newcommand*\fsize{\dimexpr\f@size pt\relax}%
  \newcommand*\lineheight[1]{\fontsize{\fsize}{#1\fsize}\selectfont}%
  \ifx\svgwidth\undefined%
    \setlength{\unitlength}{296.18659151bp}%
    \ifx\svgscale\undefined%
      \relax%
    \else%
      \setlength{\unitlength}{\unitlength * \real{\svgscale}}%
    \fi%
  \else%
    \setlength{\unitlength}{\svgwidth}%
  \fi%
  \global\let\svgwidth\undefined%
  \global\let\svgscale\undefined%
  \makeatother%
  \begin{picture}(1,0.81889749)%
    \lineheight{1}%
    \setlength\tabcolsep{0pt}%
    \put(0,0){\includegraphics[width=\unitlength,page=1]{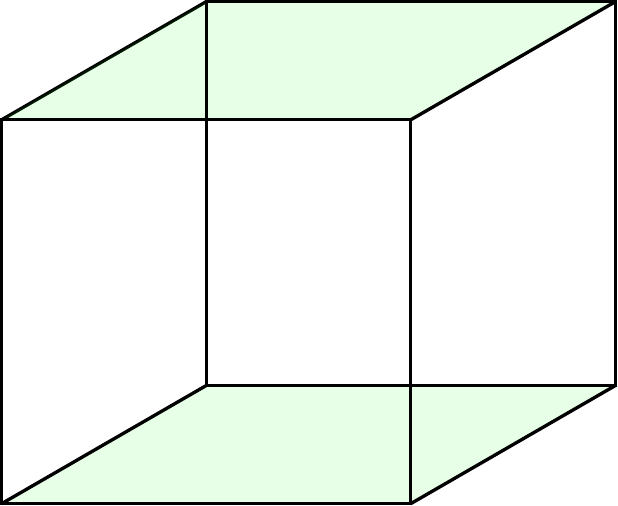}}%
    \put(0.47117386,0.70844416){\color[rgb]{0,0,0}\makebox(0,0)[lt]{\smash{\begin{tabular}[t]{l}$\Gamma_{Z^+}$\end{tabular}}}}%
    \put(0.11320467,0.36362698){\color[rgb]{0,0,0}\makebox(0,0)[lt]{\smash{\begin{tabular}[t]{l}$\Gamma_{X^-}$\end{tabular}}}}%
    \put(0.78441028,0.36362698){\color[rgb]{0,0,0}\makebox(0,0)[lt]{\smash{\begin{tabular}[t]{l}$\Gamma_{X^+}$\end{tabular}}}}%
    \put(0.67114557,0.47452481){\color[rgb]{0,0,0}\makebox(0,0)[lt]{\smash{\begin{tabular}[t]{l}$\Gamma_{Y^+}$\end{tabular}}}}%
    \put(0.34984421,0.29826407){\color[rgb]{0,0,0}\makebox(0,0)[lt]{\smash{\begin{tabular}[t]{l}$\Gamma_{Y^-}$\end{tabular}}}}%
    \put(0.47117386,0.07539667){\color[rgb]{0,0,0}\makebox(0,0)[lt]{\smash{\begin{tabular}[t]{l}$\Gamma_{Z^-}$\end{tabular}}}}%
    \put(0,0){\includegraphics[width=\unitlength,page=2]{_Figure9a.pdf}}%
  \end{picture}%
\endgroup%

        \caption{}
        \label{fig:inverse-hom-a}   
    \end{subfigure}
    \begin{subfigure}{0.325\textwidth}
        \centering
        \includegraphics[width=\textwidth]{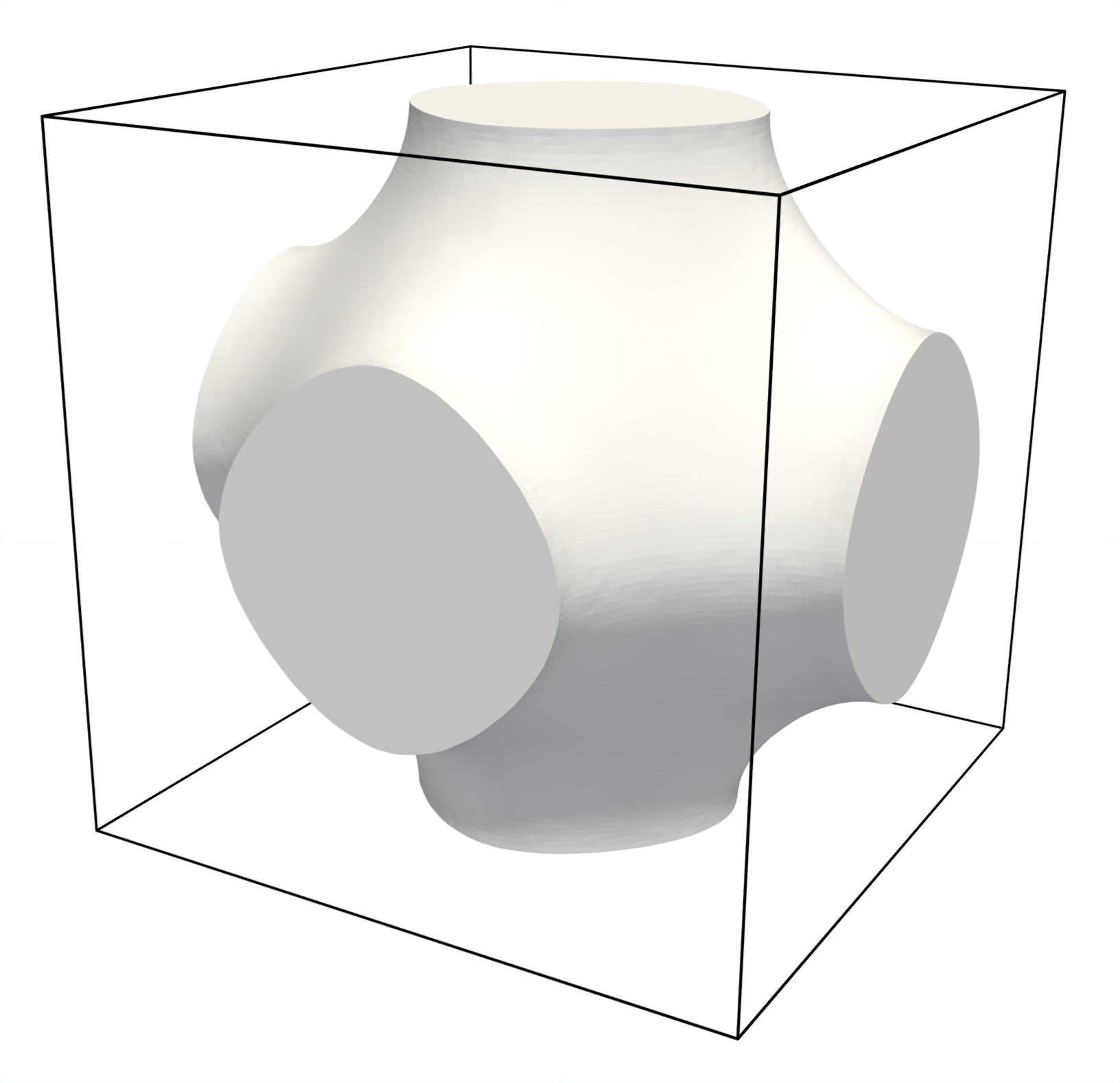}
        \caption{}
        \label{fig:inverse-hom-b}    
    \end{subfigure}
    \begin{subfigure}{0.325\textwidth}
        \centering
        \includegraphics[width=\textwidth]{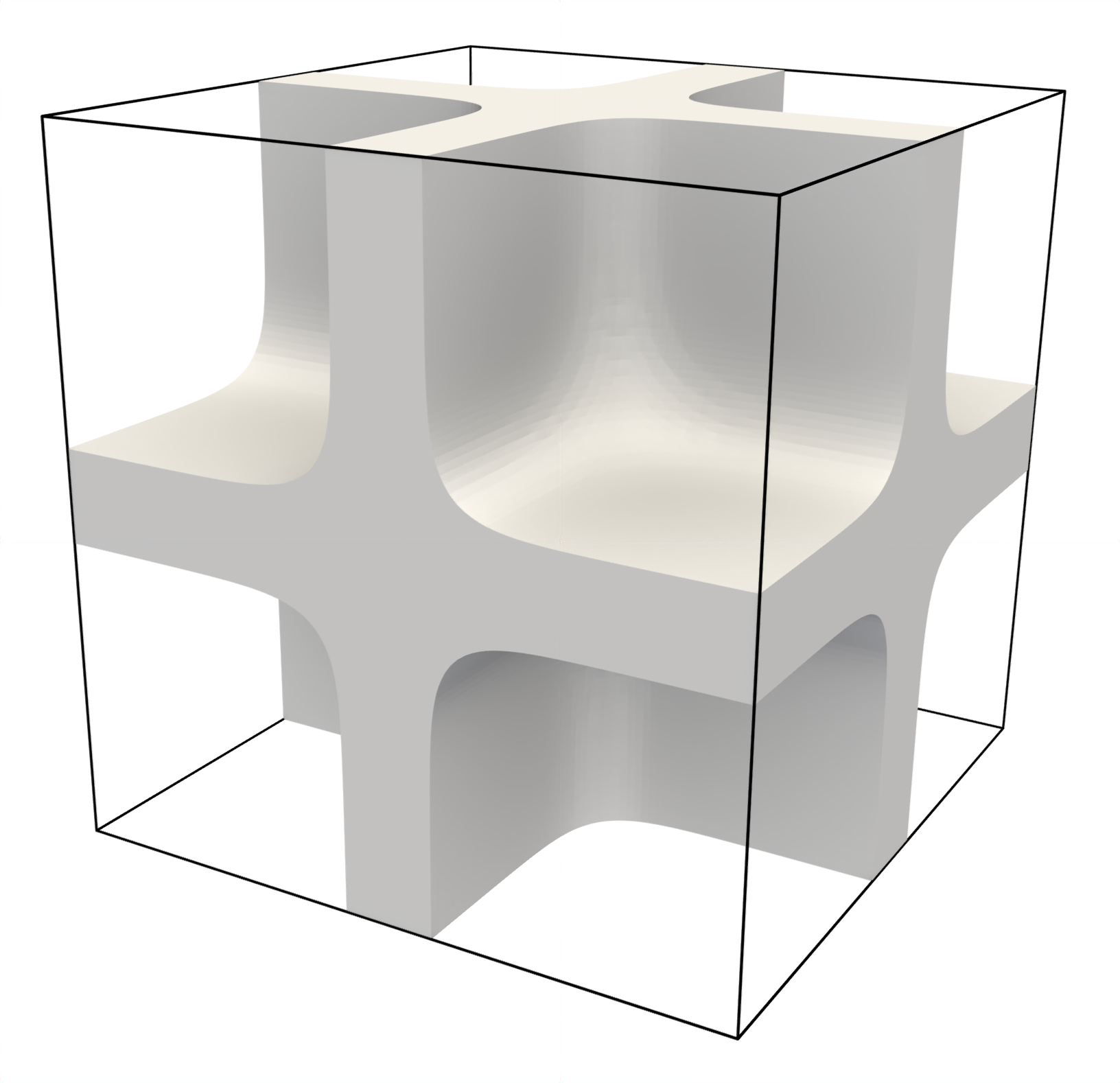}
        \caption{}
        \label{fig:inverse-hom-c}    
    \end{subfigure}
    \caption{Results for the three-dimensional inverse homogenisation problem at 40\% volume fraction. As previously, we visualise the result using an isovolume for $\phi\leq0$. Figure (a) shows the periodic boundary conditions, Figure (b) shows the initial level set function, and Figure (c) shows the result at convergence.}
    \label{fig:inverse-hom-results}
\end{figure*}

\section{Benchmarks}\label{sec: benchmarks}
In this section we benchmark the speedup and scaling capabilities of GridapTopOpt. We consider two types of benchmark:
\begin{itemize}
    \item \textit{Strong scaling}: the \textbf{total} number of degrees of freedom is fixed, while the number of processors is increased.
    \item \textit{Weak scaling}: the number of degrees of freedom \textbf{per processor} is fixed, while the number of processors is increased.
\end{itemize}
The first of these demonstrates the computational speedup obtained by increasing the number of processors for a fixed problem size, while the second illustrates the scaling capability of algorithms as we increase the problem size.

For this paper, we consider benchmarking a single iteration of the three-dimensional thermal problem (Sec. \ref{sec: thermal 3d ext}), elastic problem (Sec. \ref{sec: elast comp}), and hyperelastic problem (Sec. \ref{sec: hyperelast comp}) with $D=[0,1]^3$ for all problems. In particular, we consider the maximum time across all processors for the second iteration of the augmented Lagrangian algorithm for each problem. We then take the minimum time across the 10 independent runs to find the best obtainable performance. For the strong scaling test we fix the number of degrees of freedom at $3\times101^3$ and for the weak scaling test we fix degrees of freedom per processor at 32000. As previously we benchmark using Intel® Xeon® Platinum 8274 Processors with 4GB per core on the Gadi@NCI Australian supercomputer. In our testing we consider 8, 27, 48, 96, 144, 240, 384, 624, 864, and 1104 CPU processors. It should be noted that partitions are assigned to processors automatically via MPI and we do not explore any sort of thread pinning/process binding in our simulations.

\begin{figure*}[t]
    \centering
    \includegraphics[width=1\textwidth]{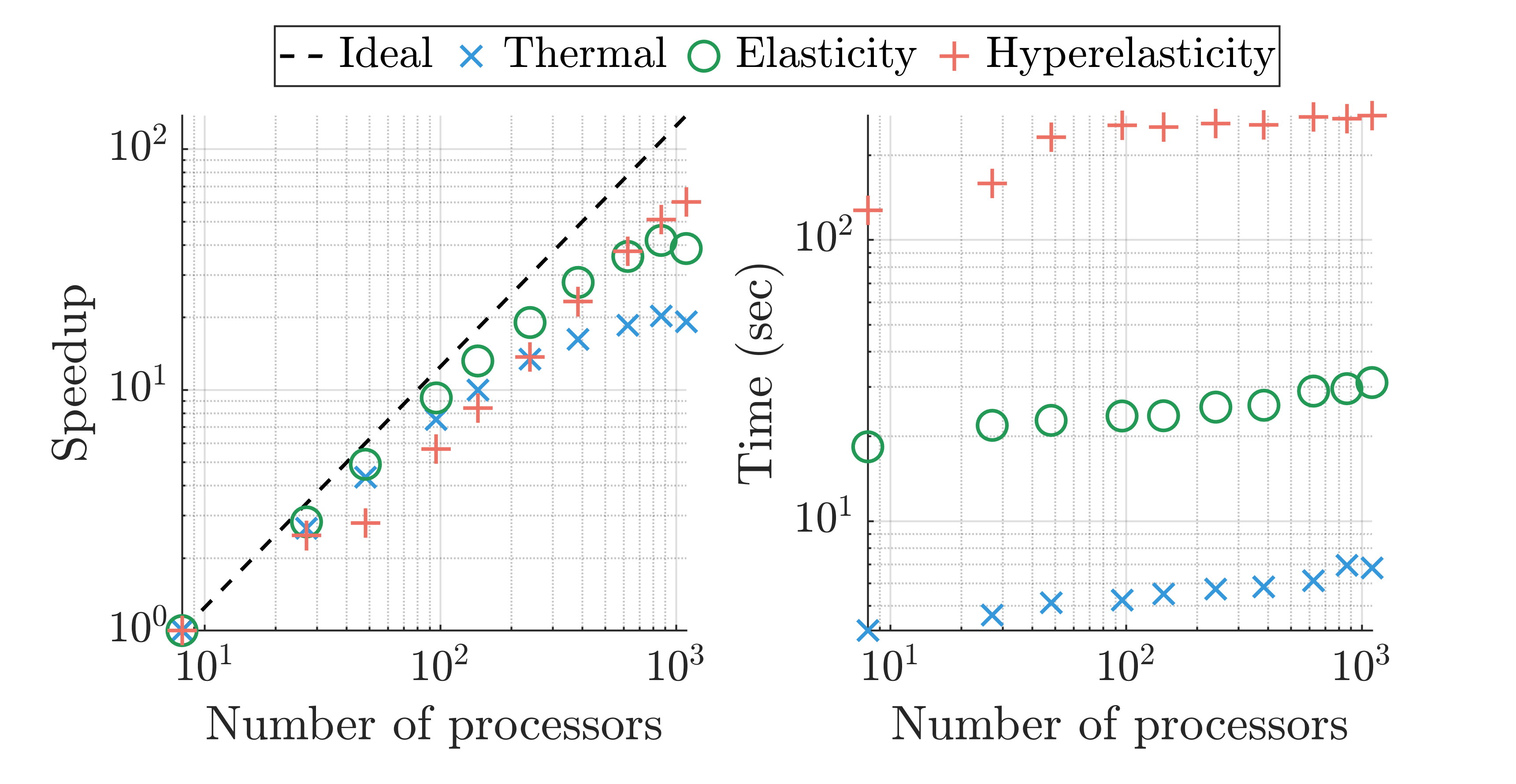}
    \caption{GridapTopOpt single-iteration benchmarks for different state equations. Figure (a) shows a strong scaling benchmark, whereby the number of degrees of freedom is fixed while the number of processors is increased. The dashed line shows the ideal strong scaling where we assume infinite communication speed and zero network latency. Figure (b) shows a weak scaling benchmark where the number of degrees of freedom per processor is fixed, and the number of processors is increased.}
    \label{fig:benchmarks}
\end{figure*}

Figure \ref{fig:benchmarks} shows the strong and weak scaling benchmarks for these problems. The strong scaling benchmark shows that increasing the number of processors increases the speedup until a point where communication overhead sufficiently reduces efficiency. On the other hand, the weak scaling results show some increase in execution time as the number of processors is increased. For the thermal and elasticity problems this is expected because the number of iterations in CG-AMG slightly increases as the number of processors is increased due to the AMG preconditioner not being weakly scalable. For the hyperelastic case, we expect a much larger computation time as we solve several linear systems per iteration. Overall, these results demonstrate reasonable speedup and scaling capabilities.

\section{Conclusions}\label{sec: concl}

In this paper we have presented an open-source computational toolbox for level set-based topology optimisation that is implemented in Julia using the Gridap package ecosystem. The core design principle of GridapTopOpt is to provide an extendable framework for solving PDE-constrained optimisation problems in serial or parallel with a high-level programming interface and automatic differentiation. The package provides users with the tools to efficiently solve a wide array of optimisation problems such as linear/nonlinear problems, inverse homogenisation problems, and multi-physics problems. We have explained the usage and demonstrated the versatility of GridapTopOpt by formulating and solving several topology optimisation problems with distinct underlying PDEs and optimisation functionals. Our strong and weak scaling benchmarks of the parallel implementation demonstrate the speedup and scaling capability of the package. In future we hope to extend the framework to parallel h-adaptive unfitted finite element methods, as well as enhance the scalability of our solvers by leveraging recent and future developments in GridapSolvers \cite{Manyer2024}. These future developments will allow GridapTopOpt to solve an even wider range of problems (e.g., fluid-structure interaction and nonlinear
inverse homogenisation).

\backmatter

\bmhead{Acknowledgments}
This work was supported by the Australian Research Council through the Discovery Projects grant scheme (DP220102759). Preliminary development of the software used computational resources provided by the eResearch Office, Queensland University of Technology. The research used computational resources provided by National Computational Infrastructure (NCI) Australia, a National Collaborative Research Infrastructure Strategy enabled capability supported by the Australian Government. The first author is supported by a QUT Postgraduate Research Award and a Supervisor Top-Up Scholarship. The above support is gratefully acknowledged. The first and last author would also like to thank the School of Mathematics, Monash University for hosting them as part of this project. Finally, we would like to thank the reviewers for their insightful comments that helped improve the manuscript.

\bmhead{Conflict of interest}
The authors have no competing interests to declare that are relevant to the content of this article.

\bmhead{Replication of Results}\label{rep results}
The source code used for this paper and additional computational details have been provided at \url{https://github.com/zjwegert/GridapTopOpt.jl/tree/Wegert_et_al_2024}.

\begin{appendices}

\section{Thermal compliance shape derivative}\label{sec: thermal shape deriv appendix}
In this appendix we show the following result:
\begin{lemma}
The shape derivative of
\begin{equation}
    J(\Omega)=\int_{\Omega}\kappa\boldsymbol{\nabla}u\cdot\boldsymbol{\nabla}u~\mathrm{d}\boldsymbol{x}
\end{equation}
is given by
\begin{equation}
\begin{aligned}
    J^\prime(\Omega)(\boldsymbol{\theta}) = -\int_{\Gamma}\kappa\boldsymbol{\nabla}u\cdot\boldsymbol{\nabla}u~\boldsymbol{\theta}\cdot\boldsymbol{n}~\mathrm{d}s
\end{aligned}
\end{equation}
with $\boldsymbol{\theta}\cdot\boldsymbol{n}=0$ on $\Gamma_N$ and $\Gamma_D$. 
\end{lemma}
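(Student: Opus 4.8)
The plan is to follow C\'ea's formal Lagrangian method, as flagged in Section~\ref{sec: shape diff}, since $J$ is constrained by the state equation~\eqref{eqn: thermal weak form}. First I would introduce a Lagrangian in which the state and adjoint fields are treated as \emph{independent} functions defined over a fixed ambient space (so that they do not co-move with $\Omega$), namely
\begin{equation*}
\mathcal{L}(\Omega,v,q)=\int_{\Omega}\kappa\boldsymbol{\nabla}v\cdot\boldsymbol{\nabla}v~\mathrm{d}\boldsymbol{x}+\int_{\Omega}\kappa\boldsymbol{\nabla}v\cdot\boldsymbol{\nabla}q~\mathrm{d}\boldsymbol{x}-\int_{\Gamma_N}gq~\mathrm{d}s.
\end{equation*}
By construction $\mathcal{L}(\Omega,u,q)=J(\Omega)$ for any $q$ once $u$ solves the state equation, because the constraint terms cancel.

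Next I would impose stationarity. Differentiating $\mathcal{L}$ with respect to $q$ recovers the weak state equation~\eqref{eqn: thermal weak form}, fixing $v=u$; differentiating with respect to $v$ at $v=u$ yields the adjoint equation $\int_{\Omega}\kappa\boldsymbol{\nabla}q\cdot\boldsymbol{\nabla}\hat{v}~\mathrm{d}\boldsymbol{x}=-2\int_{\Omega}\kappa\boldsymbol{\nabla}u\cdot\boldsymbol{\nabla}\hat{v}~\mathrm{d}\boldsymbol{x}$ for all admissible $\hat{v}$, whose unique solution is $q=p=-2u$ by the expected self-adjointness of the compliance. The crux of C\'ea's method is that, because the state and adjoint equations both hold, the cross terms $\partial_v\mathcal{L}\cdot u^\prime$ and $\partial_q\mathcal{L}\cdot p^\prime$ drop out and the total derivative of $J$ reduces to the \emph{partial} shape derivative of $\mathcal{L}$ at $(u,p)$ with the fields frozen, that is $J^\prime(\Omega)(\boldsymbol{\theta})=\left.\partial_\Omega\mathcal{L}\right|_{(u,p)}(\boldsymbol{\theta})$.

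To evaluate this partial shape derivative I would invoke the Hadamard volume formula (Lemma~4 of \citep{10.1016/j.jcp.2003.09.032_2004}), which for an integrand independent of $\Omega$ gives $\partial_\Omega\left(\int_{\Omega}f~\mathrm{d}\boldsymbol{x}\right)(\boldsymbol{\theta})=\int_{\partial\Omega}f~\boldsymbol{\theta}\cdot\boldsymbol{n}~\mathrm{d}s$. Substituting $p=-2u$, the two volume integrands combine to $\kappa\boldsymbol{\nabla}u\cdot\boldsymbol{\nabla}u-2\kappa\boldsymbol{\nabla}u\cdot\boldsymbol{\nabla}u=-\kappa\boldsymbol{\nabla}u\cdot\boldsymbol{\nabla}u$, while the surface term over the fixed $\Gamma_N$ contributes nothing since $\boldsymbol{\theta}\cdot\boldsymbol{n}=0$ there. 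Imposing $\boldsymbol{\theta}\cdot\boldsymbol{n}=0$ on both $\Gamma_N$ and $\Gamma_D$ then restricts the remaining boundary integral from $\partial\Omega$ down to $\Gamma$, delivering the claimed expression.

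The main obstacle is not the algebra but the rigorous justification of the formal step: one must argue that $u$ and $p$ depend differentiably on $\Omega$ so that the cross terms genuinely vanish, and one must carry the fields as functions on a fixed ambient domain rather than as domain-dependent solutions, which is precisely what makes C\'ea's method \emph{formal}. A secondary subtlety is the homogeneous Neumann (adiabatic) condition on the moving portion $\Gamma$: it is natural in the weak form and therefore automatically respected by the frozen fields, so no curvature-type surface contributions survive here, precisely because the only explicitly written boundary integral sits on the fixed $\Gamma_N$.
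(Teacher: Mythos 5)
Your proof is correct, and it rests on the same foundations as the paper's own argument: C\'ea's Lagrangian method, the Hadamard volume formula (Lemma 4 of \citep{10.1016/j.jcp.2003.09.032_2004}), and the assumption $\boldsymbol{\theta}\cdot\boldsymbol{n}=0$ on the fixed boundaries $\Gamma_N$ and $\Gamma_D$. The route differs in one substantive respect: you use the standard two-field Lagrangian $\mathcal{L}(\Omega,v,q)$ with an explicit adjoint variable, derive the adjoint equation, and obtain self-adjointness as a conclusion, $q=-2u$. The paper never introduces an adjoint field at all; it exploits self-adjointness from the outset and works with the one-field Lagrangian $\mathcal{L}(\Omega,v)=-\int_\Omega\kappa\boldsymbol{\nabla}v\cdot\boldsymbol{\nabla}v~\mathrm{d}\boldsymbol{x}+2\int_{\Gamma_N}gv~\mathrm{d}s+2\int_{\Gamma_D}v\kappa\boldsymbol{\nabla}v\cdot\boldsymbol{n}~\mathrm{d}s$, whose stationarity in $v$ reproduces the strong form Eq. \ref{eqn: heat eqn} (including the boundary conditions) and whose value at the state solution equals $J(\Omega)$; the shape derivative then follows from Lemmas 4 and 5 of \citep{10.1016/j.jcp.2003.09.032_2004} exactly as in your final step. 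A second, smaller difference is the treatment of the Dirichlet condition: the paper enforces it weakly through the explicit $\Gamma_D$ boundary term in its Lagrangian, so $v$ ranges over an unconstrained space, whereas you keep $v,q$ in spaces vanishing on $\Gamma_D$ and carry them on a fixed ambient domain --- legitimate here precisely because $\Gamma_D$ is fixed, and your closing remarks correctly identify this as the formal step needing justification. As for what each approach buys: your two-field version is the general template, surviving unchanged for problems that are not self-adjoint (where the adjoint is genuinely distinct from the state), while the paper's condensed one-field version is shorter and makes the self-adjoint structure manifest immediately, at the cost of a Lagrangian whose coefficients (the sign flip and the factors of $2$) must be guessed in advance and which does not generalise directly.
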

\begin{proof}
We proceed via C\'ea's method \citep{10.1051/m2an/1986200303711_1986}. Suppose we define the Lagrangian $\mathcal{L}$ to be
\begin{align*}
    \mathcal{L}(\Omega,v)&=-\int_\Omega (\kappa\boldsymbol{\nabla}v\cdot\boldsymbol{\nabla}v)\,\mathrm{d}\Omega+2\int_{\Gamma_N}gu\,\mathrm{d}\Gamma\\&\quad+2\int_{\Gamma_D}v\kappa\boldsymbol{\nabla}v\cdot\boldsymbol{n}\,\mathrm{d}\Gamma
\end{align*}
We do not include auxiliary fields as it turns out that the problem is self-adjoint. Requiring the Lagrangian to be stationary and taking a partial derivative with respect to $v$ in the direction $\varphi$ gives
\begin{align*}
    0&=\pderiv{\mathcal{L}}{v}(\varphi)=\D{}{\varepsilon}\bigg{\rvert}_{\varepsilon=0}\mathcal{L}(\Omega,v+\varepsilon\varphi)\\
    &=-2\int_\Omega \kappa\boldsymbol{\nabla}v\cdot\boldsymbol{\nabla}\varphi\,\mathrm{d}\Omega+2\int_{\Gamma_N} g \varphi\,\mathrm{d}\Gamma\\&\quad+2\int_{\Gamma_D} v\kappa\boldsymbol{\nabla}\varphi\cdot\boldsymbol{n}+\varphi\kappa\boldsymbol{\nabla}v\cdot\boldsymbol{n}\,\mathrm{d}\Gamma\\
    &=2\int_\Omega \boldsymbol{\nabla}\cdot(\kappa\boldsymbol{\nabla}v)\varphi\,\mathrm{d}\Omega-2\int_{\Gamma} \varphi \kappa\boldsymbol{\nabla}v\cdot\boldsymbol{n}\,\mathrm{d}\Gamma\\&\quad-2\int_{\Gamma_N} \varphi(\kappa\boldsymbol{\nabla}v\cdot\boldsymbol{n}-g)\,\mathrm{d}\Gamma+2\int_{\Gamma_D} v\kappa\boldsymbol{\nabla}\varphi\cdot\boldsymbol{n}\,\mathrm{d}\Gamma
\end{align*}
where we have used integration by parts. As the Lagrangian is stationary we recover Eq. \ref{eqn: heat eqn}. In addition, at the solution $u$ of Eq. \ref{eqn: heat eqn} we obtain $\mathcal{L}(\Omega,u)=J(\Omega)$ as required. Finally, applying the chain rule when shape differentiating $\mathcal{L}$ we obtain
\begin{align*}
    J^\prime(\Omega)(\boldsymbol{\theta})&=\pderiv{\mathcal{L}}{\Omega}(\boldsymbol{\theta}).
\end{align*}
Applying Lemma 4 and 5 \citep{10.1016/j.jcp.2003.09.032_2004} and using the fact that $\Gamma_N$ and $\Gamma_D$ are fixed with $\boldsymbol{\theta}\cdot\boldsymbol{n}=0$, we obtain the result:
\begin{equation*}
    J^\prime(\Omega)(\boldsymbol{\theta}) = -\int_{\Gamma}\kappa\boldsymbol{\nabla}u\cdot\boldsymbol{\nabla}u~\boldsymbol{\theta}\cdot\boldsymbol{n}~\mathrm{d}s
\end{equation*}
\end{proof}

\onecolumn
\section{Thermal compliance code}\label{sec: code appendix}
\inputminted[numbers=left,frame=lines]{julia}{therm_comp_serial.jl}
\twocolumn

\end{appendices}


\bibliography{main.bib}

\end{document}